\newtheorem{theorem}{Theorem}[subsection]
\newtheorem{corollary}{Corollary}[subsection]
\theoremstyle{remark}
\newtheorem*{remark}{Remark}
\newcommand*\taufs{\tau_{\mathcal{S}}}
\newcommand*\taukfs{\tau_{k, \mathcal{S}}}
\newcommand*\tauk{\tau_{k}}
\newcommand*\tausbfs{\tau_{(SMALL),\mathcal{S}}}
\newcommand*\tausbfsj{\tau_{(SMALL), \mathcal{S}, j}}
\newcommand*\tausb{\tau_{(SMALL)}}
\newcommand*\taubigfs{\tau_{(BIG), \mathcal{S}}}
\newcommand*\poobs{Y_{i}^{obs}}
\newcommand*\pot{Y_{i}(t)}
\newcommand*\poc{Y_{i}(c)}
\newcommand*\poz{Y_{i}(z)}
\newcommand*\meanpozk{\bar{Y}_k(z)}
\newcommand*\meanpotkobs{\bar{Y}_k^{obs}(t)}
\newcommand*\meanpockobs{\bar{Y}_k^{obs}(c)}
\newcommand*\meanpozkobs{\bar{Y}_k^{obs}(z)}
\newcommand*\model{\mathcal{F}}
\newcommand*\EE{\mathbb{E}}
\newcommand*\nk{n_k}
\newcommand*\nck{n_{c, k}}
\newcommand*\ntk{n_{t, k}}
\newcommand*\nzk{n_{z, k}}
\newcommand*\numsbunits{n_{sb}}
\newcommand*\numsizej{m_{j}}
\newcommand*\numsb{K_{j}}
\newcommand*\tauestbk{\widehat{\tau}_{(BK)}}
\newcommand*\taukest{\widehat{\tau}_{k}}
\newcommand*\tausbest{\widehat{\tau}_{(SMALL)}}
\newcommand*\taubigest{\widehat{\tau}_{(BIG)}}
\newcommand*\tausbjest{\widehat{\tau}_{(SMALL), j}}
\newcommand*\Stc{S^{2}(tc)}
\newcommand*\Sck{S^{2}_k(c)}
\newcommand*\Stk{S^{2}_k(t)}
\newcommand*\Szk{S^{2}_k(z)}
\newcommand*\Stck{S^{2}_k(tc)}
\newcommand*\sckest{s^{2}_k(c)}
\newcommand*\stkest{s^{2}_k(t)}
\newcommand*\szkest{s^{2}_k(z)}
\newcommand*\sigmatc{\sigma^{2}(tc)}
\newcommand*\sigmack{\sigma^{2}_k(c)}
\newcommand*\sigmatk{\sigma^{2}_k(t)}
\newcommand*\sigmazk{\sigma^{2}_k(z)}
\newcommand*\varestk{\widehat{\sigma}^{2}_{k}}
\newcommand*\varestbkone{\widehat{\sigma}^{2}_{(BK)}}
\newcommand*\varestsbm{\widehat{\sigma}^{2}_{(SMALL/m)}}
\newcommand*\varestsbp{\widehat{\sigma}^{2}_{(SMALL/p)}}
\newcommand*\varestsbs{\widehat{\sigma}^{2}_{(SMALL/s)}}
\newcommand*\varestsbj{\widehat{\sigma}^{2}_{(SMALL), j}}
\begin{document}
\def\spacingset#1{\renewcommand{\baselinestretch}%
{#1}\small\normalsize} \spacingset{1}


  \title{Insights on Variance Estimation for Blocked and Matched Pairs Designs\thanks{\noindent{\textit{Email: } \texttt{npashley@g.harvard.edu}. The authors would like to thank Guillaume Basse, Avi Feller, Colin Fogarty, Michael Higgins, Luke Keele, and Lo-Hua Yuan for their comments and edits.
We would also like to thank members of Luke Miratrix's and Donald B. Rubin's research labs for their useful feedback on the project and Peter Schochet and Kosuke Imai for insightful discussion of this material.
Finally, we thank anonymous reviewers for their helpful feedback.
The research reported here was partially supported by the Institute of Education Sciences, U.S. Department of Education, through Grant R305D150040.
This material is also based upon work supported by the National Science Foundation Graduate Research Fellowship under Grant No. DGE1745303.
Any opinion, findings, and conclusions or recommendations expressed in this material are those of the authors and do not necessarily reflect the views of the National Science Foundation, the Institute of Education Sciences or the U.S. Department of Education.
}}}
    \author{Nicole E. Pashley\\Department of Statistics, Harvard University  \and Luke W. Miratrix\\Graduate School of Education, Harvard University}
 \maketitle

\bigskip
\begin{abstract}
Evaluating blocked randomized experiments from a potential outcomes perspective has two primary branches of work.
The first focuses on larger blocks, with multiple treatment and control units in each block.
The second focuses on matched pairs, with a single treatment and control unit in each block.
These literatures not only provide different estimators for the standard errors of the estimated average impact, but they are also built on different sets of assumptions.
Neither literature handles cases with blocks of varying size that contain singleton treatment or control units, a case which can occur in a variety of contexts, such as with different forms of matching or post-stratification.
In this paper, we reconcile the literatures by carefully examining the performance of variance estimators under several different frameworks.
We then use these insights to derive novel variance estimators for experiments containing blocks of different sizes.
\end{abstract}

\noindent%
{\it Keywords:}  Causal inference; Potential outcomes; Precision; Finite sample inference; Randomization inference; Neymanian Inference
\vfill

\newpage
\spacingset{1.45} 
\section{Introduction}
\label{sec:intro}
Beginning with Neyman and Fisher, there is a long literature of analyzing randomized experiments by focusing on the assignment mechanism rather than some generative model of the data. 
One major family of experimental designs in this literature is the blocked randomized experiment, where units are grouped to hopefully create homogenous collections, and then treatment assignment is randomized within each group \citep[see][]{Fisher1992}. 
Ideally, this process gives a higher precision estimate of the overall average treatment effect, as compared to a completely randomized design.

In the potential outcome causal literature, \citep[as in][]{CausalInferenceText, rosenbaum_2010},\footnote{In particular, we focus on the potential outcomes literature as opposed to the experimental design literature \citep[as in][]{cochran1950, WuHamada}.} much of the prior work on randomized experiments has focused on two forms of blocking: blocking where there are several treated and control units in each block and blocking where there is exactly one treated and one control unit in each block (matched pairs).
See, for example \cite{imai_king_stuart_2008} or \cite{imbens2011experimental} for treatments of large blocks and \cite{abadie2008} or \cite{Imai2008} for treatments of matched pairs.
This literature, for the most part, has a gap: it has not extensively treated the cases where researchers have generated groups of varying size but where there is still only one treated and/or one control in some of the blocks, which we call the ``hybrid design.''
Recent textbooks such as \textit{Field Experiments: Design, Analysis and Interpretation} \citep[][]{gerber_green_text} and \textit{Causal Inference for Statistics, Social, and Biomedical Sciences: An Introduction} \citep[][]{CausalInferenceText} do not propose a clear answer for Neyman-style variance estimation in this case.
While obtaining a point estimate for the overall average treatment effect is straightforward in this context, assessing the uncertainty of such an estimate is not.
Currently one would instead have to turn to Fisher-style permutation tests, which typically rely on constant treatment effect assumptions, or regression-based approaches, which can be biased and usually require assumptions as to the residual error structure.
We build on prior work to fill this gap by providing novel methods for conducting Neyman-style analyses for this more general hybrid design.
The approach to causal inference used in this work also has strong connections to the survey sampling literature, as treated in, e.g., \citet{sarndal2003model} or \citet{CochranWilliamG.WilliamGemmell1977St}.

This gap is important as hybrid experiments with blocks of different sizes, and different numbers of treated and control units within the blocks, can easily arise in many modern social science experiments.
For example, multisite trials in education often have several sites (e.g., districts) with only a few schools in each site.
Many matching methods used in observational studies generate hybrid designs as well. 
For instance, Coarsened Exact Matching (CEM) \citep{iacus2012causal} can lead to many variable-sized blocks, some of which have singleton treatment or control units.
``Full matching,'' which identifies collections of units that are similar on some baseline covariates \citep{hansen_full_match, rosenbaum1991characterization}, creates variable-sized blocks, each with exactly one treated or one control unit.
Our approach allows for a Neyman-style analysis in these contexts.
See Section~\ref{sec:data_exp} for more on these applications.

There are several different models used for Neyman-style causal inference.
The first, the finite sample model, takes the sample of units in the experiment as fixed, using the assignment mechanism as the sole source of randomness.
Other so-called super-population or population models assume that the units in the experimental sample come from some larger population; this can induce additional uncertainty that needs to be accounted for.
With blocking, there is the further complication of how the blocks in the final experimental sample are formed.
There can be fixed blocks in which every unit inherently belongs to one of a finite number of blocks; flexible blocks made by the experimenter once a sample is obtained; and structural blocks that capture natural groupings of units.
There are also several possible sampling mechanisms beyond the classic simple random sampling of units typically presumed, such as sampling from strata corresponding to the blocks or sampling entire blocks rather than individual units.
We believe these variants in how blocks are formed and sampled has caused the gap of the hybrid design: because much of the current literature uses different frameworks tailored to the specific special cases of either large blocks or matched pairs, it is not easily extended as the variance and variance estimators differ across these variants.
As part of our work we carefully outline the common frameworks used and discuss how they are different from each other and how they connect to different types of blocks.
We also analyze the performance of uncertainty estimation for all cases.

Recent work by \cite{fogarty_2017} has also addressed some of these issues.
In particular, Fogarty presents a method for estimating variance with small blocks of variable size, not just matched pairs.
His estimators share some similarities with ours, though they are distinct and we note the difference in bias in Section~\ref{subsec:var_est_perf_fin}.
He also makes explicit the issue of differing results under different population and sampling frameworks by comparing multiple settings.
In our paper, we tackle the issue of creating a cohesive hybrid estimator for experiments with large and small blocks and do not focus on the use of covariates to model treatment effect heterogeneity.

In Section~\ref{sec:setup_notation} we set out our notation and discuss blocked randomization.
We begin with the finite sample framework because it is a building block for the infinite population frameworks. Section~\ref{sec:var_est} provides methods for estimating uncertainty in the case of large blocks, small blocks, and the hybrid of the two, and gives their bias under the finite sample framework.
We then, in Section~\ref{subsec:lit_review}, provide true variance formula and the performance characteristics of the variance estimators for several infinite population frameworks.
Section \ref{sec:sim} contains finite sample simulation studies to illustrate estimator performance and Section \ref{sec:data_exp} illustrates estimation in two data examples.
For clarity in presentation, we have moved the derivations of provided formulae to the Supplementary Material.
To use these methods in practice, we refer readers to our R package, \citep{blkvar}.
Sample scripts demonstrating its use and replicating our simulations are also available.

\section{Overall setup and notation} \label{sec:setup_notation}

We use the Neyman-Rubin model of potential outcomes (Rubin, \citeyear{rubin_1974}; Splawa-Neyman et al., 1923/\citeyear{neyman_1923}).
We assume the Stable Unit Treatment Value Assumption of no differential forms of treatment and no interference between units \citep[][]{rubin_1980}.
Consider an experimental sample of $n$ units. 
In a completely randomized experiment, the entire collection of the units is divided into a treatment group and a control group by taking a simple random sample of $pn$ units as the treatment group and leaving the remainder as control.
In a blocked randomized experiment, our sample is divided into $K$ blocks, formed based on some pretreatment covariate(s), with $n_k$ units in block $k$.
Each block $k$ is then treated as a mini-experiment, with a fixed number of $p_k n_k$ units being randomly assigned to treatment and the rest to control, independently of the other blocks.

The sample average treatment effect (SATE) is the typical estimand in so-called finite sample inference, which takes our sample as fixed, leaving the assignment mechanism as the only source of randomness. 
Under blocking, the SATE within block $k$, for $k=1,...,K$, is
\[\taukfs = \frac{1}{\nk} \sum_{i: b_{i} = k}\big(\pot - \poc\big),\]
where $\pot$ and $\poc$  are the potential outcomes for unit $i$ under treatment and control, respectively, and where $b_i$ indicates the block that unit $i$ belongs to.
The overall SATE (see \cite{CausalInferenceText}, p. 86) is then
\[\taufs = \frac{1}{n} \sum_{i=1}^{n}\big(\pot - \poc\big).\]

In this work, we consider two estimators for the SATE (and later the population average treatment effect), one typically used for complete randomization and one for blocked randomization.
Define the variable $Z_{i}$ as $Z_{i} = t$ if unit $i$ is assigned treatment and $Z_{i} = c$ if unit $i$ is assigned control, for $i=1,...,n$.
Let $\mathbb{I}_{Z_i = t}$ be the indicator that unit $i$ received treatment, $n_{t}$ be the total number of treated units, and $n_{c}$ be the total number of control units.
So, $n_{t} = \sum_{i = 1}^{n}\mathbb{I}_{Z_i = t}$, $n_{c} = n - n_{t}$.
Similarly, let $\ntk$, $\nck$ indicate these values within block $k$.
Define $\poobs = Y_i(Z_i)$ as the outcome we observe for unit $i$ given a specific treatment $Z_i$. 
The blocked randomization estimator is then a weighted average of simple difference estimators for each block
\[\tauestbk = \sum_{k=1}^K\frac{\nk}{n}\taukest, \]
with the
\[\taukest = \frac{1}{\ntk}\sum_{i: b_{i}=k} \mathbb{I}_{Z_i = t} \pot  - \frac{1}{\nck} \sum_{i: b_{i} =k} (1-\mathbb{I}_{Z_i = t}) \poc,\]
$k=1,...,K$, being simple difference estimators within each block.

In general, $\tauestbk$ is unbiased, with
\[ \EE\left[\tauestbk | \mathcal{S}\right] = \taufs, \]
with $\EE\left[ \widehat{M} | \mathcal{S} \right]$ the expected value of some estimator $\widehat{M}$ for a given, fixed, finite sample $\mathcal{S}$ over the blocked randomization.
It is describing and estimating the variance of $\tauestbk$ that is more tricky.
This assessment is the goal of the paper, but first we need to introduce a few more useful concepts.

An important aspect of blocking is how the blocks are formed.
Explicit articulation of block formation will be useful when we discuss asymptotic properties of our estimators and will also be used to differentiate the various population frameworks in Section~\ref{subsec:lit_review}.
We identify three primary ways that blocks are formed: 
\begin{enumerate}[label=(\alph*)]
\itemsep-.2em 
\item \underline{Fixed blocks:} Occurs when the total number of blocks and the covariate distribution of blocks is fixed before looking at the sample. E.g., blocking that occurs on a single categorical covariate.
\item \underline{Flexible blocks:} Occurs when the covariate distribution and total number of blocks may not be known before looking at the sample's covariates.  E.g. if there are many covariates or continuous covariates and matching or discretizing is used to form blocks.
\item \underline{Structural blocks:} Occurs when units have some natural grouping such that the blocks are self-contained. The members of each block are fixed and if a block is represented in the sample, typically all members of that block are in the sample. E.g., twins or classrooms. 
\end{enumerate}
Note that structural blocks are often thought of as clusters.
With clusters, however, treatment assignment is commonly assigned at the cluster level, whereas we are focusing on treatment assigned within cluster.
We use ``structural block'' to clarify this difference.
 
\section{Variance estimation}\label{sec:var_est}

We next discuss how to estimate a blocked estimator's variance, an integral part of obtaining standard errors and confidence intervals.
We discuss from a Neyman-Rubin randomization perspective.
See Supplementary Material~\ref{append:var_est_alt} for a discussion of alternative variance estimators (such as from linear models) that make additional assumptions on the data structure.
We first investigate bias under a finite sample framework and extend to other frameworks in Section~\ref{subsec:lit_review}.

We start by giving the true variance in the finite sample.
To do so, we need some additional notation.
The mean of the potential outcomes for the units in the sample under treatment $z$ for block $k$ is
\[\meanpozk = \frac{1}{n_k}  \sum_{i: b_{i} = k} \poz, \]
the sample variance is
\[\Szk = \frac{1}{n_k-1} \sum_{i: b_{i} = k} (\poz - \meanpozk)^{2}, \]
and the sample variance of the individual level treatment effects is
\[\Stck = \frac{1}{n_k - 1} \sum_{i: b_{i} = k} \Big(\pot - \poc - \taukfs \Big)^{2}.\]

For the finite sample, the variance of $\taukest$ within a block is well known (see \cite{CausalInferenceText, imbens2011experimental}):
\begin{equation}
\text{var}(\taukest|\mathcal{S}) = \frac{\Stk}{\ntk} + \frac{\Sck}{\nck} - \frac{\Stck}{\nk}. \label{eq:var_fs_cr}
\end{equation}
Summing these across the independent blocks, with the weights for block sizes, gives an overall variance of
\begin{align}
\text{var}\left(\tauestbk|\mathcal{S}\right)& = \sum_{k=1}^{K}\frac{\nk^{2}}{n^{2}}\text{var}(\taukest |\mathcal{S})= \sum_{k=1}^{K}\frac{\nk^2}{n^2}\Big(\frac{\Stk}{\ntk} + \frac{\Sck}{\nck} - \frac{\Stck}{\nk}\Big).\label{eq:var_fs_bk}
\end{align}

For blocked experiments, the type of variance estimator one would use in the finite sample depends on the sizes of blocks one has.
In cases where we have at least two treated and two control units in each block, we can directly extend classic results for completely randomized experiments by using them within each block and weighting \citep[see, e.g.,][]{imbens2011experimental, Miratrix2013, mukerjee2018using}.
In particular, we can estimate each variance component of Equation~\ref{eq:var_fs_bk} as
\begin{align}
\varestk  = \widehat{\text{var}}(\taukest) =\frac{\sckest}{\nck} + \frac{\stkest}{\ntk},
\end{align}
with $\szkest$ the sample variance of the units within block $k$ under treatment $z$.
Then we can combine these to get the plug in variance estimator of
\begin{align}
\varestbkone  = \widehat{\text{var}}(\tauestbk) = \sum_{k=1}^{K} \frac{\nk^{2}}{n^{2}}\left(\frac{\sckest}{\nck} + \frac{\stkest}{\ntk}\right).\label{eq:var_est_bk_one}
\end{align}
This gives a conservative estimate due to the dropping of the $\Stck/\nk$ terms.
Some tightening is possible by exploiting features such as differences in the shape of the observed treatment and control outcome distributions; for examples see \citet{aronow2014sharp}, Chapter 6 of \citet{CausalInferenceText}, or \citet{schochet_2016}.
We call this the ``big block'' style of blocking, and the ``big block'' estimator.

For the ``small blocks'' case, where our blocks have only one treated unit or one control unit, we need to use an alternative approach as we cannot estimate the variance for a treatment arm with a single unit.
Our approach is presented below.
To give some background, the analytical problems that arise when estimating the variance in matched pairs experiments, especially when working in the finite sample framework, have been lamented by many statisticians \citep[see, e.g.,][]{imbens2011experimental}. 
The issues arise from the fact that there is no way to estimate the within pair variance with only one unit assigned to treatment and one unit assigned to control in each pair.
Previous work has found conservative estimators, however, which we build on.
For instance, \cite{Imai2008} showed that the standard matched pairs estimator is biased in the finite sample setting and put bounds on the true variance. 
The RCT-Yes R package and documentation \cite[][]{schochet_2016} also provides a conservative variance estimator for the matched pairs design (as well as estimators for blocked designs); this is discussed more in Supplementary Material~\ref{subsubsec:rct_yes_est}.

For a hybrid experiment with both big and small blocks, we combine results to create an overall variance estimator. 

\subsection{Small block experiments with equal size blocks}
When we have small blocks of the same size, we can directly use the usual variance estimator in the matched pairs literature \citep[e.g.,][]{Imai2008} as a variance estimator for $\tauestbk$, no matter what the block sizes are, as also noted by \cite{fogarty_2017}.
This gives a variance estimator of
\begin{align}
\varestsbs  = \frac{1}{K(K-1)}\sum_{k=1}^{K} (\taukest - \tauestbk)^{2}.\label{eq:var_est_sb_same_size}
\end{align}
This estimator directly estimates the variance of the overall block treatment effect estimator, rather than estimating the variance for each individual block and then weighting.
We will see that, depending on the framework used, this estimator can give positively biased estimates if the true $\tau_k$ tends to differ across blocks.

\subsection{Small block experiments with varying size blocks}\label{subsec:var_est_small}

For experiments with small blocks of varying sizes we offer two variance estimators. 
The first directly extends the standard matched pairs estimator by grouping the blocks by size into $J$ groups and using Equation~\ref{eq:var_est_sb_same_size} for each group. 
We then weight and combine to get an overall variance estimator.\\
\noindent \textbf{Stratified Small Block Variance Estimator:}
\begin{align}
\varestsbm = \frac{1}{\left(\sum_{j=1}^J \numsizej \numsb\right)^2}\sum_{j=1}^J (\numsizej\numsb)^2\varestsbj, \label{eq:var_est_sb_m}
\end{align}
where $\numsb$ is the number of blocks of size $\numsizej$ and 
\begin{align}
\varestsbj  = \frac{1}{\numsb(\numsb-1)}\sum_{k: \nk = \numsizej} (\taukest - \tausbjest)^{2}\label{eq:var_est_sb_j}
\end{align}
with $\tausbjest = \sum_{k: \nk = \numsizej}\taukest/ \numsb$.
That is, grouping by the same size allows for using the equal size block estimator above.
While straightforward, this is not ideal because it requires at least two blocks of each size in the overall experiment to estimate each $\varestsbj$.
See Supplementary Material~\ref{append:var_est_sb_m_derv} for further detail.

The second approach allows the variance of all of the small blocks to be estimated at the same time, without requiring multiple blocks of the same size.
\\
\noindent \textbf{Unified Small Block Variance Estimator:}
\begin{align}
\varestsbp = \sum_{k=1}^K \frac{\nk^2}{(n-2\nk)(n+\sum_{i=1}^K\frac{n_i^2}{n-2n_i})}(\taukest - \tauestbk)^{2}. \label{eq:var_est_sb_p}
\end{align}
For $\varestsbp$ to be defined and guaranteed conservative, no one block can make up half or more of the units.
We derived this estimator using the basic form of the matched pairs variance estimator as a weighted sum of the squared differences between the estimated average block treatment effects and the estimated overall average treatment effect.
The weights then come from a simple optimization (see Supplementary Material~\ref{append:var_est_small_p_derv}), and partially account for the different blocks having different levels of precision when estimating the variance of the block-level impacts.
This estimator has similar finite sample properties to the standard estimator for blocks of the same size (Equation~\ref{eq:var_est_sb_same_size}).
In particular, it is also conservative and unbiased when the block average treatment effects are all the same.
When block sizes are all the same, this reduces to the usual matched pairs type estimator.

\subsection{Hybrid experiments}\label{subsec:var_est_hybrid}
When doing variance estimation in a hybrid blocked design, we can split the blocks up into small blocks and big blocks.
Grouping the big and small blocks together allows us to write the causal effect estimand as a combination of two estimands for our two different types of block sizes.
Let there be $\numsbunits$ total units in small blocks in the sample.
Then
\begin{align*}
\taufs & = \frac{n-\numsbunits}{n}\taubigfs + \frac{\numsbunits}{n}\tausbfs
\end{align*}
where 
\[\taubigfs = \frac{1}{n-\numsbunits}\sum_{k: \ntk \geq 2, \nck \geq 2}n_k\tau_k \quad \text{and} \quad \tausbfs = \frac{1}{\numsbunits}\sum_{k: \ntk = 1 \text{ or } \nck = 1}n_k\tau_k.\]

The estimator for the overall treatment effect can also be written as
\begin{align*}
\tauestbk 
& = \frac{n-\numsbunits}{n}\taubigest + \frac{\numsbunits}{n}\tausbest.
\end{align*}

For finite sample inference, we can similarly break down the variance, and estimator of the variance, of $\tauestbk$ because the block estimators are independent due to the block randomized treatment assignment.
\\
\noindent \textbf{Hybrid Variance Estimator:}
\begin{align*}
\widehat{\text{var}}\left(\tauestbk\right)
& = \frac{(n-\numsbunits)^2}{n^2}\widehat{\text{var}}\left(\taubigest\right) + \frac{\numsbunits^2}{n^2}\widehat{\text{var}}\left(\tausbest\right).
\end{align*}
Here we would use $\varestbkone$ (Equation~\ref{eq:var_est_bk_one}) over just the big blocks for $\widehat{\text{var}}\left(\taubigest\right)$ and either $\varestsbm$ (Equation~\ref{eq:var_est_sb_m}) or $\varestsbp$ (Equation~\ref{eq:var_est_sb_p}) over just the small blocks (with the appropriate assumptions for just the small blocks) for $\widehat{\text{var}}\left(\tausbest\right)$.
Thus, when we have small blocks, we can estimate the variance for those small blocks separately and use the usual blocking estimator on the larger blocks, essentially treating these as two separate experiments and combining with appropriate weights in the end.
Alternatively, one could use $\varestsbm$ or $\varestsbp$ for all blocks, but we do not recommend this for the finite sample.

\subsection{Finite sample bias of the variance estimators}\label{subsec:var_est_perf_fin}

In the finite setting all of the above estimators are conservative, and are only unbiased in specific circumstances.
Each block is a miniature complete randomized experiment.
For such experiments, $\varestk$ is known (\citeauthor{CausalInferenceText}, \citeyear{CausalInferenceText},  p. 92; Splawa-Neyman et al., 1923/\citeyear{neyman_1923}) to have bias 
\begin{align*}
\EE\left[\varestk |\mathcal{S}\right] - \text{var}\left(\taukest |\mathcal{S}\right) &= \frac{\Stck}{\nk}.
\end{align*}

If all of the blocks have at least two treated and two control units, we can extend this result to $\varestbkone$ (Equation~\ref{eq:var_est_bk_one}), which has bias 
\begin{align*}
\EE\left[\varestbkone |\mathcal{S}\right] - \text{var}\left(\tauestbk|\mathcal{S}\right) &= \sum_{k=1}^K\frac{\nk}{n^2}\Stck .
\end{align*}
This extends readily to the big block component of the hybrid estimator by only including in the sum those blocks that are big, changing the $n^2$ in the denominator by $(n-\numsbunits)^2$, and weighting appropriately.

For the small blocks of varying sizes, we have two main results.
In presenting these, we assume that the whole sample is made up of small blocks, though, as with the bias of $\varestbkone$, the extension to the small block component of the hybrid estimator is straightforward.
See Supplementary Material~\ref{append:bias_var_est_sb_m_strat_samp} and \ref{append:var_est_small_p_derv} for proofs.  
The first is a Corollary to classic results on matched pairs \citep[see, e.g.,][]{Imai2008}:

\begin{corollary}\label{lemma:var_est_sb_m_fin_samp}
The bias of $\varestsbm$ (Equation~\ref{eq:var_est_sb_m}) under the finite framework is 
\begin{align*}
\EE\left[\varestsbm |\mathcal{S}\right] - \text{var}\left(\tausbest|\mathcal{S}\right) &= \sum_{j= 1}^{J}\frac{\numsb\numsizej^2}{n^2(\numsb-1)}\sum_{k: n_k=m_j}\Big(\taukfs - \tausbfsj\Big)^{2} .
\end{align*}
\end{corollary}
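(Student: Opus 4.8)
The plan is to reduce the claim to the bias formula for the equal-size small-block estimator $\varestsbs$ of Equation~(\ref{eq:var_est_sb_same_size}), applied separately within each size group, and then reassemble the pieces using the independence of the block-level estimators. First I would decompose both $\tausbest$ and $\varestsbm$ by size group. Since $\numsbunits = \sum_{j=1}^J \numsizej\numsb$ and all blocks within group $j$ share the common size $\numsizej$, the size-weighted average over group $j$ coincides with the plain average $\tausbjest$, so
\[
\tausbest = \sum_{j=1}^J \frac{\numsizej\numsb}{\numsbunits}\,\tausbjest
\qquad\text{and}\qquad
\varestsbm = \frac{1}{\numsbunits^2}\sum_{j=1}^J (\numsizej\numsb)^2\,\varestsbj .
\]
Because treatment is assigned independently across blocks, the estimators $\{\tausbjest\}_{j=1}^J$ are functions of disjoint, independent randomizations and are therefore mutually independent given $\mathcal{S}$; hence $\text{var}(\tausbest\mid\mathcal{S}) = \numsbunits^{-2}\sum_{j} (\numsizej\numsb)^2\,\text{var}(\tausbjest\mid\mathcal{S})$. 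It thus suffices to evaluate $\EE[\varestsbj\mid\mathcal{S}]$ for each $j$.

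Next, observe that restricted to the $\numsb$ blocks of common size $\numsizej$, the pair $(\tausbjest,\varestsbj)$ is exactly the estimator/variance-estimator pair of the equal-size small-block design on those blocks (compare Equations~(\ref{eq:var_est_sb_same_size}) and (\ref{eq:var_est_sb_j})). Its finite-sample bias comes from the standard matched-pairs expansion: write $\sum_{k:\,n_k=\numsizej}(\taukest-\tausbjest)^2 = \sum_{k:\,n_k=\numsizej}\taukest^2 - \numsb\,\tausbjest^2$, take expectations using $\EE[\taukest\mid\mathcal{S}]=\taukfs$ together with the within-group independence of the block estimators, and simplify. The key point is that the coefficient multiplying $\sum_{k:\,n_k=\numsizej}\text{var}(\taukest\mid\mathcal{S})$ in $\EE[\varestsbj\mid\mathcal{S}]$ collapses to exactly $\numsb^{-2}$, which equals $\text{var}(\tausbjest\mid\mathcal{S})$; the randomization-variance terms therefore cancel and only the dispersion of the true block effects survives, giving
\[
\EE\big[\varestsbj \mid \mathcal{S}\big] = \text{var}\big(\tausbjest \mid \mathcal{S}\big) + \frac{1}{\numsb(\numsb-1)}\sum_{k:\,n_k=\numsizej}\big(\taukfs-\tausbfsj\big)^2 ,
\]
where $\tausbfsj = \numsb^{-1}\sum_{k:\,n_k=\numsizej}\taukfs$ (again, weighted and unweighted averages agree within a size group).

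Finally I would substitute this identity into $\EE[\varestsbm\mid\mathcal{S}] = \numsbunits^{-2}\sum_j(\numsizej\numsb)^2\,\EE[\varestsbj\mid\mathcal{S}]$, subtract $\text{var}(\tausbest\mid\mathcal{S})$ in its group-decomposed form, cancel the matching variance terms group by group, and collect the weights $(\numsizej\numsb)^2/[\numsbunits^2\,\numsb(\numsb-1)] = \numsb\numsizej^2/[\numsbunits^2(\numsb-1)]$, which yields the stated formula. The main obstacle is the bookkeeping in the equal-size step — in particular verifying the arithmetic identity that makes the $\text{var}(\taukest\mid\mathcal{S})$ terms cancel exactly against $\text{var}(\tausbjest\mid\mathcal{S})$; once that is in hand, the group-wise reassembly is purely mechanical.
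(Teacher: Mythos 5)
Your proposal is correct and follows essentially the same route as the paper's proof: within each size group you expand $\sum_{k}(\taukest-\tausbjest)^2$, use $\EE[\taukest\mid\mathcal{S}]=\taukfs$ and independence of the block-level estimators to show $\EE[\varestsbj\mid\mathcal{S}]=\text{var}(\tausbjest\mid\mathcal{S})+\frac{1}{\numsb(\numsb-1)}\sum_{k:\nk=\numsizej}(\taukfs-\tausbfsj)^2$, and then recombine the groups with the weights $(\numsizej\numsb)^2/\numsbunits^2$ exactly as in the paper's derivation. No gaps; the cancellation of the randomization-variance terms you flag as the key bookkeeping step is precisely the calculation the paper carries out.
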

The above extends prior results for $\varestsbs$ for matched pairs (see \cite{Imai2008}, \cite{CausalInferenceText}, p. 227, or, for a more general case, \cite{fogarty_2017}).
$\varestsbm$ is conservative and unbiased when the average treatment effect is the same for all blocks of the same size (similar to the unbiased result from \cite{Imai2008} for $\varestsbs$).

For $\varestsbp$ we have
\begin{theorem}\label{lemma:var_est_sb_p_fin_samp}
The bias of $\varestsbp$ (Equation~\ref{eq:var_est_sb_p}) under the finite framework is 
\begin{align*}
\EE\left[\varestsbp |\mathcal{S}\right]& - \text{var}\left(\tausbest|\mathcal{S}\right)\\
 &=\sum_{k=1}^K \frac{\nk^2}{(n-2\nk)(n+\sum_{i=1}^K\frac{n_i^2}{n-2n_i})}(\taukfs - \tausbfs)^{2},
\end{align*}
assuming no blocks have $n_k \geq n/2$.
\end{theorem}
If the average treatment effect is the same across all small blocks then this estimator is unbiased, and if there is heterogeneity, it is conservative.
This is a distinction from the behavior of the variance estimator suggested in Section 4.2 of \cite{fogarty_2017} for use with variable size small blocks without covariates, in which even with the average treatment effect being the same across all small blocks, the bias is strictly greater than zero.

\begin{remark}
Both small block estimators are conservative, which raises the question of whether one is superior.
The constant in front of each term of the bias of both estimators is of order $\nk^2/n^2$.
Then we expect the bias of $\varestsbm$ to be less than the bias of $\varestsbp$ when the treatment effects of blocks of similar sizes are similar because the variance of impacts within blocks of a given size will be smaller than across all of the blocks.
However, $\varestsbm$ has the drawback that it can only be used when we have at least two blocks of each small size.

The improved potential performance of $\varestsbm$ when there is homogeneity within block sizes does suggest that we could group blocks in some other way if we had prior knowledge of which blocks were most similar.
That is, $\varestsbm$ relies on the blocks being equal size so the weights factor out of the sum to give the expression for the cross-block estimate of variation.
But we could first subdivide our blocks based on some similarity measure and apply $\varestsbp$ to each group, combining the parts with the hybrid weighting approach.
This could make $\varestsbp$ less conservative while maintaining its validity.

\cite{mukerjee2018using} create a general framework for a class of conservative Neyman variance estimators that extends to a variety of causal estimands and estimators in the finite sample context.
Of our estimators, $\varestbkone$ is directly shown as an example in their paper, and $\varestsbm$ can be shown to fall under their framework as well, as we show in Supplementary Material~\ref{supp:muk_et_al_paper}.
The hybrid of these two can then also be included.
Interestingly, it appears that $\varestsbp$ does not fall within their framework, and instead we need to rely on our own methods and derivations.
See Supplementary Material~\ref{supp:muk_et_al_paper} for more details on these connections.

For our estimators, how conservative the estimators are may vary with blocks sizes.
In the case where all blocks are the same size, when we have blocks with $m$ control units and 1 treated unit, as $m$ increases the variance of the treatment effect estimator will decrease, as we are getting a more precise estimate for the control units.
However, the form of the bias of $\varestsbs$ remains the same.
Therefore, with large $m$ the bias of $\varestsbs$ due to treatment effect heterogeneity becomes larger relative to the true variance.
This intuition extends to the variable size case as well.
In these cases alternative variance estimation strategies, such as discussed in Supplementary Material~\ref{append:var_est_alt}, may become more appealing.

The type of blocks also impacts whether the bias of these estimators go to zero as sample size increases.
For instance, one might argue for the use of $\varestsbp$ instead of $\varestbkone$ even if we have big blocks, because the condition for unbiasedness for $\varestsbp$ (that all blocks have the same average treatment effect) could be considered less stringent than for $\varestbkone$ (that there is zero treatment variation within each block).
However, with fixed blocks, the number of units within each block increases as sample size increases and the bias of $\varestbkone$ will go to zero, the standard result, but the bias of $\varestsbp$ will not, unless all of the blocks have the same average treatment effect.
In this case, as the blocks grow to be big, we would use $\varestbkone$.

In the hybrid setting the overall bias will be a weighted sum of the biases for the big and small block components.
Therefore, because the overall weighting depends on the block sizes, having a poor estimator for the small blocks may not have a large effect on the overall bias if small blocks make up only a small proportion of the sample.

There is no way to unbiasedly estimate variance within small blocks without additional structure or covariates. 
If we think that the treatment effects of different strata are not too far apart, then we suggest using one of the previous estimators. 
We at least know that the bias incurred is positive.
However, if we have reason to believe that the treatment effects of different strata will be very far apart, a plug-in estimator, as discussed in Supplementary Material~\ref{append:var_est_alt}, may be more appropriate.

\end{remark}

\section{Infinite Population Frameworks}
\label{subsec:lit_review}

Up to this point we have examined blocking in a finite sample framework, conditioning on the units in the experiment in question.
In the literature, however, blocking has often been examined under a variety of infinite population frameworks.
In particular, the matched pairs literature uses a framework where the blocks themselves are sampled from an infinite population of blocks, whereas the big block literature typically assumes stratified random sampling from a finite number of infinite size strata.
Using different population frameworks will give different answers to important questions of what the true variance of the treatment effect estimate is and what the bias of our variance estimators are.
In this section, we first discuss the literature related to variance estimation for infinite populations, identifying the apparent tensions that exist.
We then systematically discuss different frameworks, deriving the true variance of the treatment effect estimators under each of them.
We also evaluate the bias of the variance estimators introduced in Section~\ref{sec:var_est}.
We focus on infinite superpopulations; finite superpopulations substantially larger than the sample would give similar results.
We explore work pertaining to the use of linear models, such as \cite{cochran_1953} and \cite{winston_lin}, in Supplementary Material~\ref{subsubsec:var_est_lin_reg}.
An important note is that in some cases these sampling schemes are chosen for convenience and that the generalizability of the experiment to the population will depend upon the assumptions made in them being true.
The sampling model may also be considered to serve as a conservative approach to finite sample inference \citep[see][]{ding2017bridging}.

\vspace{0.5cm}
\noindent \textbf{Related work}\\
For matched pairs experiments, \cite{Imai2008} showed that with a superpopulation of an infinite number of structural blocks, specifically matched pairs, from which pairs are randomly sampled, the standard matched pairs variance estimator (Equation~\ref{eq:var_est_sb_same_size}), is unbiased for the population average treatment effect (PATE). 
On the other hand, \cite{imbens2011experimental} showed that the standard matched pairs variance estimator is biased in the setting where we have fixed blocks and units are drawn using stratified random sampling (see Section~\ref{subsec:inf_infstrat_framework} for more on this setting). 
This is a clear example of how the population framework being used matters.
We therefore advise practitioners to carefully consider what population and sampling structure they are assuming and to not simply assume a framework for convenience.

The general blocked design has been previously discussed in various forms. 
 \cite{imbens2011experimental} discussed blocking in the context of a superpopulation with a fixed number of strata from which units are sampled using a stratified sampling method. 
He formed unbiased estimators for the variance in this context, assuming that the blocks each have at least two units assigned to treatment and control. 
These results are similar to finite sample results discussed in Section~\ref{sec:var_est} and will be discussed more in Section~\ref{subsec:inf_infstrat_framework}. 
\cite{imai_king_stuart_2008} analyzed estimation error and variance with the blocked design. 
\cite{scosyrev_2014} also analyzed the blocked experiment in the finite sample and under two sampling frameworks, recognizing that the different settings resulted in different outcomes.
\cite{savje2015performance} analyzed flexible ``threshold'' blocking and made critical points about the importance of block structure and sampling design when analyzing blocked experiments, which we will echo and expand on.

\subsection{Infinite populations in general}\label{subsec:inf_framework}
Inference for the population average treatment effect (PATE) typically takes the sample as a random sample from some larger population, as opposed to inference for the SATE discussed earlier which held the sample of potential outcomes as fixed.
This makes estimation an implicit two-step process, estimating the treatment effect for the sample and extrapolating this estimate to the population.
Frequently, in fact, the estimators themselves are the same as for finite sample inference even though the estimands are different.

Define the PATE as
\[\tau = \EE[\pot - \poc|\model],\]
where $\model$ both indicates the block type and sampling framework.
This is the same as the direct average of the unit-level treatment effects for all of the units in the population, as is commonly used \citep[see][p. 99]{CausalInferenceText}, as long as our sampling mechanism is not biased.
Here we will only consider frameworks where the sampling scheme provides a sample that, on average, has the same average treatment effect as the population but note that bias from the sampling mechanism can be fixed using weighting if the sampling mechanism is known \citep[see][]{miratrix_survey}.

Under blocking, the PATE within block $k$ is
\begin{align*}
\tauk &=\EE[\pot - \poc|b_{i} = k, \model],
\end{align*}
where, again, $b_i$ indicates the block that unit $i$ belongs to. It is possible that $k$ indexes a (countably) infinite set of blocks in the case of some infinite population models.

Overall, using the law of total expectation and variance decompositions, we can generally obtain the properties of our estimators with respect to population estimands by first obtaining expressions for a finite sample and then averaging these expressions across the sampling distributions.
In other words, we heavily exploit $\EE\left[ \widehat{M} |\model \right] = \EE\left[\EE\left[ \widehat{M} | \mathcal{S} \right] | \model \right]$, where $\mathcal{S}$ is a sample obtained from $\model$, our population and sampling framework. 
Under any unbiased framework $\model$, we have the typical result \citep[e.g. see][]{imbens2011experimental}
\[  \EE\left[\tauestbk | \model\right] = \EE\left[\taufs | \model\right] = \tau .\]

There are several different frameworks that one might assume.
These can generally be characterized by two primary features: the block types, which also dictates the population strata structure, and the sampling scheme.
Note that the term strata is used for the population here analogously to blocks in the sample.
We may obtain a sample using simple random sampling and then form blocks based on covariates post-sampling and pre-randomization, i.e. flexible blocks.
Or we may have fixed blocks (e.g. blood types) and use stratified sampling where we sample units from each population stratum.
Finally, we may have structural blocks and conceptualize a population of an infinite number of these blocks (e.g. schools in an ``infinite'' population of schools) from which we randomly select a fixed number of blocks.
As we show next, the bias of the variance estimators can differ depending on the framework assumed.
We refer to frameworks using their sampling method as a shorthand, leaving the block type and population structure implicit.

\subsection{Simple random sampling, flexible blocks}\label{subsec:inf_srs_framework}

In this framework, denoted $SRS$, units are sampled at random, without regard to block membership, from the population.
In this context, we focus on the use of flexible blocks, e.g. blocking using clustering on a continuous covariate or based on observed covariates in the sample obtained.
Structural blocks do not make sense in this framework (e.g. one would always sample pairs of twins not individuals who are twins if we wish to run a twin study) and fixed blocks give rise to difficulties when the sample does not have units from all population strata.
For blocked experiments with fixed blocks in this framework, see \cite{scosyrev_2014}.

The variance in this framework, using the basic variance decomposition, is
\[\text{var}(\tauestbk|SRS) = \EE\left[\sum_{k=1}^K\frac{\nk^2}{n^2}\left(\frac{\Sck}{\nck}+\frac{\Stk}{\ntk}-\frac{\Stck}{\nk}\right)\Big|SRS\right]+\text{var}\left(\taufs|SRS\right).\]
The expectation is across the sampling and blocking process.
SRS denotes the simple random sample and subsequent blocking of sampled units.

In this context we have an unbiased variance estimator if we have all big blocks:
\begin{theorem}
The variance estimator
 \begin{align}
 \widehat{\sigma}^2_{SRS} = \sum_{k=1}^{K}\frac{\nk(\nk-1)}{n(n-1)}\left(\frac{\sckest}{\nck}+\frac{\stkest}{\ntk}\right) + \sum_{k=1}^K \frac{\nk}{n(n-1)} \left(\taukest - \tauestbk\right)^2 \label{eq:var_est_srs_bk}
 \end{align}
 is an unbiased estimator for $\text{var}(\tauestbk|SRS)$, if $\nck \geq 2$ and $\ntk \geq 2$.
 \end{theorem}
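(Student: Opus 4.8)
The plan is to proceed by the law of total expectation, $\EE[\widehat{\sigma}^2_{SRS}\mid SRS] = \EE\big[\EE[\widehat{\sigma}^2_{SRS}\mid\mathcal{S}]\,\big|\,SRS\big]$, computing the inner conditional expectation first (where the block sizes $\nk$ and all sample potential outcomes are fixed) and only afterward averaging over the sampling and block-formation process. For the inner step I would treat the two sums in (\ref{eq:var_est_srs_bk}) separately. The first sum is immediate: by the standard unbiasedness of a sample variance under simple random sampling within a block, $\EE[\sckest\mid\mathcal{S}] = \Sck$ and $\EE[\stkest\mid\mathcal{S}] = \Stk$, so its conditional expectation is $\sum_k \frac{\nk(\nk-1)}{n(n-1)}\big(\frac{\Sck}{\nck}+\frac{\Stk}{\ntk}\big)$, which I would rewrite using the blockwise identity $\frac{\Sck}{\nck}+\frac{\Stk}{\ntk} = \text{var}(\taukest\mid\mathcal{S}) + \frac{\Stck}{\nk}$ coming from (\ref{eq:var_fs_bk}).

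The heart of the argument is $\EE[(\taukest - \tauestbk)^2\mid\mathcal{S}]$. Writing $\taukest - \tauestbk = (\taukest - \taukfs) - (\tauestbk - \taufs) + (\taukfs - \taufs)$ and using that the block estimators $\taukest$ are mutually independent given $\mathcal{S}$ (block-randomized assignment) with mean $\taukfs$, the cross term with the deterministic piece $\taukfs-\taufs$ vanishes in expectation, and since $\tauestbk=\sum_j\frac{n_j}{n}\widehat{\tau}_j$ gives $\text{Cov}(\taukest,\tauestbk\mid\mathcal{S})=\frac{\nk}{n}\text{var}(\taukest\mid\mathcal{S})$ and $\text{var}(\tauestbk\mid\mathcal{S})=\sum_j\frac{n_j^2}{n^2}\text{var}(\widehat{\tau}_j\mid\mathcal{S})$, one gets
\[
\EE[(\taukest - \tauestbk)^2\mid\mathcal{S}] = \text{var}(\taukest\mid\mathcal{S})\Big(1 - \frac{2\nk}{n}\Big) + \sum_{j=1}^K \frac{n_j^2}{n^2}\text{var}(\widehat{\tau}_j\mid\mathcal{S}) + (\taukfs - \taufs)^2.
\]

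Summing this against the weights $\frac{\nk}{n(n-1)}$ (using $\sum_k \frac{\nk}{n(n-1)} = \frac{1}{n-1}$) and adding the contribution of the first sum of (\ref{eq:var_est_srs_bk}), I would collect the terms proportional to $\text{var}(\taukest\mid\mathcal{S})$ and verify by a short computation that their total coefficient collapses to exactly $\nk^2/n^2$, so these pieces reconstitute $\text{var}(\tauestbk\mid\mathcal{S}) = \sum_k \frac{\nk^2}{n^2}\text{var}(\taukest\mid\mathcal{S})$. What remains is $\sum_k \frac{\nk-1}{n(n-1)}\Stck + \sum_k \frac{\nk}{n(n-1)}(\taukfs - \taufs)^2$, which collapses via the within/between (ANOVA) decomposition $(n-1)\Stc = \sum_k (\nk - 1)\Stck + \sum_k \nk(\taukfs - \taufs)^2$ to the single term $\Stc/n$. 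Hence $\EE[\widehat{\sigma}^2_{SRS}\mid\mathcal{S}] = \text{var}(\tauestbk\mid\mathcal{S}) + \Stc/n$.

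Finally I would average over $SRS$: because under simple random sampling the unit-level effects $\pot-\poc$ are i.i.d.\ draws from the population, $\EE[\Stc\mid SRS]$ is the population variance of the individual effects and $\EE[\Stc/n\mid SRS] = \text{var}(\taufs\mid SRS)$; combining this with the variance decomposition $\text{var}(\tauestbk\mid SRS) = \EE[\text{var}(\tauestbk\mid\mathcal{S})\mid SRS] + \text{var}(\taufs\mid SRS)$ recorded just before the theorem yields $\EE[\widehat{\sigma}^2_{SRS}\mid SRS] = \text{var}(\tauestbk\mid SRS)$. The main obstacle is purely bookkeeping — tracking the cross-block covariance contributions and confirming that the coefficient of each $\text{var}(\taukest\mid\mathcal{S})$ truly reduces to $\nk^2/n^2$; performing all per-block algebra conditionally on $\mathcal{S}$ (so the $\nk$ are constants) and deferring the sampling expectation to the last step is what keeps this manageable even though the $\nk$ are random under $SRS$.
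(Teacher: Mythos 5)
Your proposal is correct and follows essentially the same route as the paper's derivation in Supplementary Material~\ref{append_var_srs}: the same conditional expectation of $\sum_k a_k(\taukest-\tauestbk)^2$ with $a_k=\nk/(n(n-1))$ (derived exactly as in the paper's calculation for $\varestsbp$, using cross-block independence), the same ANOVA decomposition of $\Stc$ into within- and between-block pieces, and the same final step identifying $\EE[\Stc/n\,|\,SRS]$ with $\text{var}(\taufs\,|\,SRS)$ in the variance decomposition. The only difference is presentational: you verify unbiasedness of the given estimator forward (showing $\EE[\widehat{\sigma}^2_{SRS}\,|\,\mathcal{S}]=\text{var}(\tauestbk\,|\,\mathcal{S})+\Stc/n$), whereas the paper constructs the estimator backward from the target decomposition, so no new ideas are involved.
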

 
See Supplementary Materials~\ref{append_var_srs} for a derivation.
The first term in the estimator looks similar to our usual big block estimator and captures part of the first term in our variance decomposition.
The second term looks similar to our proposed small block estimator and accounts for the rest of the variation.
While very similar to the estimator found in \cite{scosyrev_2014}, we have made adjustments to achieve unbiasedness of the estimator  whereas \cite{scosyrev_2014} focuses on consistency.
\cite{scosyrev_2014} also works with fixed blocks where the number of blocks is assumed known before sampling and weights are used to match the sample to the population proportions, as opposed to flexible blocks which allow random numbers of blocks that are created post-sampling.

\begin{remark}
If we na\"ively use $\varestbkone$ (Equation~\ref{eq:var_est_bk_one}) our bias will be
\[\EE\left[\varestbkone|SRS\right] - \text{var}(\tauestbk|SRS) = \frac{1}{n} \EE\left[\sum_{k=1}^K\frac{\nk}{n}\Stck-\Stc \Big|SRS\right],\]
where $S^2(tc)$ is the sample variance of individual level treatment effects across the whole sample.
This result follows from the derivations in Supplementary Materials~\ref{append_var_srs} and it implies that $\varestbkone$ could be anti-conservative in this setting if there is generally treatment variation across samples (making $\Stc > 0$), but units put within the same block are nearly identical in terms of impacts (making $\Stck \approx 0$).
This could happen when the experimenter is successfully making homogenous blocks.

Similarly, if we use either of the small block variance estimators, the bias will be the difference between the expected finite sample bias for those estimators (which depends on treatment effect heterogeneity between blocks) and $\EE\left[\Stc\Big|SRS\right]/n$, which corresponds to treatment effect heterogeneity across the whole population.
Therefore whether these estimators are conservative or not depends upon the structure of the population and how the blocks are formed.
\end{remark}

\subsection{Stratified sampling, fixed blocks}\label{subsec:inf_infstrat_framework}

In the ``stratified sampling'' framework, denoted $\model_1$, there are K fixed strata of infinite size in the population.
Then $\nk$ units are randomly sampled from strata $k$ (i.e., stratified random sampling is used).
Here we have fixed blocks.
We assume that $\nk$ is fixed and that $\nk/n$ is the population proportion of units in stratum $k$, for simplicity.
Otherwise, a  weighting scheme, as mentioned in Section \ref{subsec:inf_framework}, would be needed to create an unbiased estimator of the direct average of treatment effects in the population.
This is the framework used in  \cite{imbens2011experimental} and \cite{Miratrix2013}, who show the following result under equal proportions treated within each block, which simplifies the weights.

As in the finite sample, overall variance is a weighted sum of within block variances:
\begin{align}
\text{var}(\tauestbk|\model_1) &  = \sum_{k=1}^{K}\frac{\nk^{2}}{n^{2}}\text{var}\Big(\hat{\tau}_{k}|\model_{1}\Big) = \sum_{k=1}^{K}\frac{\nk^{2}}{n^{2}}\Big(\frac{\sigmack}{\nck} + \frac{\sigmatk}{\ntk}\Big) \label{eq:var_strat_samp_bk},
\end{align}
with $\sigmazk$ the population variance of the potential outcomes under treatment $z$ in strata $k$.

As noted in \cite{imbens2011experimental}, the variance estimator of big blocks, $\varestbkone$ (Equation~\ref{eq:var_est_bk_one}), is unbiased in this framework.
The estimators for the variance of the small blocks, however, can have bias.
We have two results pertaining to this.
For presentation of results for small blocks, we assume that all blocks in the sample are small but the results extend directly to just the small block component of the hybrid variance estimators.

First, as with the finite sample, we can extend results for $\varestsbs$ \citep[see][]{imbens2011experimental} to $\varestsbm$.

\begin{corollary}\label{lemma:bias_var_est_sb_m_strat_samp}
The bias of $\varestsbm$ (Equation~\ref{eq:var_est_sb_m}) under the stratified sampling framework is 
\begin{align*}
\EE\Big[\varestsbm|\model_1\Big] - \text{var}(\tausbest|\model_1) &= \sum_{j= 1}^{J}\frac{\numsb\numsizej^2}{n^2(\numsb-1)}\sum_{k: n_k=m_j}\Big(\tauk - \tausb\Big)^{2} .\end{align*}
\end{corollary}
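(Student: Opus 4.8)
The plan is to lift the finite-sample bias of Corollary~\ref{lemma:var_est_sb_m_fin_samp} to the stratified-sampling framework by iterating expectations, in the same spirit as the passage from $\varestsbs$ in the finite sample to $\varestsbs$ under $\model_1$ in the matched-pairs literature. Write $B_\mathcal{S}$ for the finite-sample bias, so that Corollary~\ref{lemma:var_est_sb_m_fin_samp} reads $\EE[\varestsbm \mid \mathcal{S}] = \text{var}(\tausbest \mid \mathcal{S}) + B_\mathcal{S}$ with $B_\mathcal{S} = \sum_j \frac{\numsb\numsizej^2}{\numsbunits^2(\numsb-1)}\sum_{k: n_k = \numsizej}(\taukfs - \tausbfsj)^2$. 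Taking expectations over the sampling, $\EE[\varestsbm \mid \model_1] = \EE\big[\text{var}(\tausbest \mid \mathcal{S}) \mid \model_1\big] + \EE[B_\mathcal{S} \mid \model_1]$. The law of total variance gives $\text{var}(\tausbest \mid \model_1) = \EE\big[\text{var}(\tausbest \mid \mathcal{S}) \mid \model_1\big] + \text{var}\big(\EE[\tausbest \mid \mathcal{S}] \mid \model_1\big)$, and since each $\taukest$ is finite-sample unbiased for $\taukfs$, we have $\EE[\tausbest \mid \mathcal{S}] = \tausbfs$ and the second term is $\text{var}(\tausbfs \mid \model_1)$. Subtracting, the awkward term $\EE[\text{var}(\tausbest \mid \mathcal{S}) \mid \model_1]$ cancels and the target bias reduces to $\EE[B_\mathcal{S} \mid \model_1] - \text{var}(\tausbfs \mid \model_1)$.

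It then remains to evaluate these two quantities under $\model_1$. Here the per-stratum samples are drawn independently, so the block means $\taukfs$ are independent across $k$, each an average of $n_k$ i.i.d. draws from stratum $k$; hence $\EE[\taukfs \mid \model_1] = \tauk$ and $\text{var}(\taukfs \mid \model_1) = \sigmatck/n_k$. Fixing a size class, I would expand $\EE[(\taukfs - \tausbfsj)^2 \mid \model_1]$ into its variance part plus its squared-mean part, sum over the blocks in the class, and collect the $\sum_k\sum_{l\ne k}$ cross-terms; the variance contributions then collapse to $\tfrac{\numsb-1}{\numsb}\sum_{k: n_k = \numsizej}\sigmatck/\numsizej$, while the squared-mean contributions give $\sum_{k: n_k = \numsizej}(\tauk - \tausbj)^2$, with $\tausbj$ the population analogue of $\tausbfsj$, i.e. the within-size-class average of the $\tauk$. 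Separately, independence of the block means yields $\text{var}(\tausbfs \mid \model_1) = \numsbunits^{-2}\sum_{k:\,\text{small}} n_k^2\,\sigmatck/n_k = \sum_j \numsizej\numsbunits^{-2}\sum_{k: n_k = \numsizej}\sigmatck$.

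Putting the pieces together, multiplying the size-class variance contribution by the outer weight $\numsb\numsizej^2/(\numsbunits^2(\numsb-1))$ yields exactly $\sum_j \numsizej\numsbunits^{-2}\sum_{k: n_k = \numsizej}\sigmatck = \text{var}(\tausbfs \mid \model_1)$, so these terms cancel and only the between-block heterogeneity term survives, giving the claimed formula. I expect the main obstacle to be purely the combinatorial bookkeeping in the middle step: one must keep straight that $\tausbfsj$ is the \emph{unweighted} within-size-class average, handle the $(1-1/\numsb)^2$ and $1/\numsb^2$ coefficients in the cross-terms so that $(\numsb-1)^2/\numsb^2 + (\numsb-1)/\numsb^2 = (\numsb-1)/\numsb$, and verify that the $(\numsizej\numsb)^2$ weights in the definition of $\varestsbm$ align with the $n_k^2$ weights appearing in $\text{var}(\tausbfs \mid \model_1)$ so that the sampling-variance terms cancel exactly rather than merely approximately. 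Everything else mirrors the tower-property argument already used for Corollary~\ref{lemma:var_est_sb_m_fin_samp}.
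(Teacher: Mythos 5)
Your proposal is correct, and the cancellation you anticipate does work out exactly: $\EE[B_\mathcal{S}\mid\model_1]$ picks up $\sum_j \frac{\numsb\numsizej^2}{\numsbunits^2(\numsb-1)}\cdot\frac{\numsb-1}{\numsb}\cdot\frac{1}{\numsizej}\sum_{k:n_k=\numsizej}\sigmatck = \frac{1}{\numsbunits^2}\sum_k n_k\sigmatck$, which is precisely $\text{var}(\tausbfs\mid\model_1)$, so only the heterogeneity term survives. The paper, however, takes a shorter route: its appendix proves the finite-sample identity $\EE[\varestsbj\mid\mathcal{S}]=\text{var}(\tausbjest\mid\mathcal{S})+\frac{1}{\numsb(\numsb-1)}\sum_{k:n_k=\numsizej}(\taukfs-\tausbfsj)^2$ and then simply notes that the stratified-sampling case ``follows directly by changing the expectations and notation'' --- i.e., one reruns the same one-pass expansion conditioning on $\model_1$, using that the $\taukest$ are independent with $\EE[\taukest\mid\model_1]=\tauk$, and the bias emerges immediately as $\frac{1}{\numsb(\numsb-1)}\sum_k(\tauk-\tausbj)^2$; no law-of-total-variance bookkeeping is needed because under $\model_1$ the true variance is itself just the weighted sum of the $\text{var}(\taukest\mid\model_1)$ (there is no $\Stck$-type correction to absorb). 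Your decomposition route buys something the paper's shortcut leaves implicit: it shows explicitly how the finite-sample heterogeneity bias plus the sampling variability of the block SATEs recombine, with the $\frac{\numsb-1}{\numsb}\sigmatck/\numsizej$ terms exactly offsetting $\text{var}(\tausbfs\mid\model_1)$, at the cost of more combinatorial work. One small point of reconciliation: your derivation (like the paper's appendix and its finite-sample analogue, and consistent with the text's claim that unbiasedness holds when effects agree within each block size) centers at the within-size-class average $\tausbj$; the $\tausb$ appearing in the displayed corollary should be read as $\tausbj$.
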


As with finite sample inference, this shows that $\varestsbm$ is a conservative estimator unless the average treatment effect is the same across all small blocks of the same size, in which case it is unbiased.
See Supplementary Material~\ref{append:bias_var_est_sb_m_strat_samp} for the derivation.

Second, for our new variance estimator we have the following result:
\begin{corollary}\label{lemma:bias_var_est_sb_p_strat_samp}
The bias of $\varestsbp$ (Equation~\ref{eq:var_est_sb_p}) under the stratified sampling framework is 
\begin{align*}
\EE&\Big[\varestsbp|\model_1\Big] - \text{var}(\tausbest|\model_1)\\
 &= \sum_{k=1}^K \frac{\nk^2}{(n-2\nk)(n+\sum_{i=1}^K\frac{n_i^2}{n-2n_i})}\Big(\tauk - \tausb\Big)^{2},
\end{align*}
assuming no blocks have $n_k \geq n/2$.
\end{corollary}
This shows that $\varestsbp$ is also a conservative estimator (given no block makes up more than half the sample) and it is unbiased when the average treatment effect is the same across all small blocks.
See Supplementary Material~\ref{append:var_est_small_p_derv} for a derivation.

\subsection{Random sampling of strata, structural blocks}\label{subsec:inf_finstrat_framework}

In the ``random sampling of strata'' framework, denoted $\model_2$, there are an infinite number of strata of finite size, i.e. an infinite number of structural blocks.
$K$ strata are then randomly chosen to be in the sample and randomization is done within each of the sample blocks.
This setting, with equal block sizes, is often used in the matched pairs literature, such as in \cite{Imai2008}.

Within this framework, which blocks are included in the sample is itself random.
Therefore, the variance estimator needs to capture not only the within strata variance but also the variance due to which strata are chosen to be in the sample.
Furthermore, if the block sizes vary, the total number of units is random which introduces additional complexities.

For the more general variable-size version of this framework, the variance of $\tauestbk$ is
\begin{align}
\text{var}\left(\tauestbk|\model_{2}\right) 
&= \EE\left[\sum_{k:B_k=1}\frac{n_k^2}{n^2}\left(\frac{\Sck}{\nck} + \frac{\Stk}{\ntk} - \frac{\Stck}{\nk}\right)|\model_{2}\right] + \text{var}\left(\tau_{\mathcal{S}}|\model_{2}\right), \label{eq:var_inf_strat_bk}
\end{align}
where $B_k$ is the indicator that stratum $k$ is included in the sample, with $B_k = 1$ indicating sample membership and $B_k = 0$ otherwise.

When blocks are of the same size, we can simplify the expression with $\frac{\nk^2}{n^2} = \frac{1}{K^2}$, which is no longer random.
If we have all blocks of the same size, then we can rewrite $\varestsbs$ (Equation~\ref{eq:var_est_sb_same_size}) using sample inclusion indicators as
\[ \varestsbs  = \frac{1}{K(K-1)}\sum_{k} B_k(\taukest - \tauestbk)^{2}, \]
and this is an unbiased estimator for $\text{var}(\tauestbk | \model_2)$.
This is simply the variance of the estimated block effect in the sample.
\cite{Imai2008} showed that this estimator is unbiased in this setting with an infinite population of matched pairs.
See Supplementary Material~\ref{append:bias_var_est_sb_m_inf_strat} for the proof of this result extended to other small block types of equal size.

Variance estimators when the strata vary in size are more complicated.
In particular, under this framework there is a chance that there is only a single block of a given size, making the first variance estimator infeasible.
If we condition on the number of strata drawn of each possible strata size, assuming that there are multiple strata of the each size in the sample, we obtain the following Corollary:
\begin{corollary}\label{lem:rand_samp_strat_varsbm}
In the conditioned case, assuming it is defined, $\varestsbm$ (Equation~\ref{eq:var_est_sb_m}) is an unbiased estimator for $\text{var}(\tauestbk | \model_2)$.
\end{corollary}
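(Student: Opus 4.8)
The plan is to condition on the size profile of the sampled strata, reduce each size class to a collection of i.i.d.\ block estimators, and then use the elementary fact that the (rescaled) sample variance of i.i.d.\ observations is unbiased for their variance. First I would condition on the counts $(K_1,\dots,K_J)$ of sampled strata of each size $m_1,\dots,m_J$ --- this is the conditioned regime of the statement, so the target is the conditional variance $\text{var}(\tauestbk|\model_2,\{K_j\})$ --- assuming each $K_j\geq 2$ so that every $\varestsbj$ is well defined; conditioning also makes $n=\sum_{j=1}^J m_j K_j$ deterministic. Under $\model_2$ the $K$ sampled strata are i.i.d.\ draws from the infinite population, so conditioning on the multiset of their sizes leaves, within each size class $j$, a set of $K_j$ strata that are i.i.d.\ from the subpopulation of size-$m_j$ strata, and the $J$ collections are mutually independent. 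Because the within-block randomization is carried out independently across blocks and, for a given block, is a function only of that block's own potential outcomes (and its treated count, which we may view as part of the drawn stratum or additionally condition on), the estimators $\{\taukest : n_k = m_j\}$ are conditionally i.i.d.\ for each fixed $j$, with some common variance $v_j$, and independent across $j$.

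Next I would group $\tauestbk$ by block size: $\tauestbk = \sum_{j=1}^J w_j\,\tausbjest$ with $w_j = m_j K_j/n$ and $\sum_j w_j = 1$, where $\tausbjest = \frac{1}{K_j}\sum_{k:n_k=m_j}\taukest$ is the mean of the $K_j$ i.i.d.\ estimators in class $j$. Independence across $j$ gives $\text{var}(\tauestbk|\model_2,\{K_j\}) = \sum_{j=1}^J w_j^2\,\text{var}(\tausbjest|\model_2,\{K_j\}) = \sum_{j=1}^J w_j^2\,v_j/K_j$ (one can alternatively reach this sum by conditioning on $\{K_j\}$ in Equation~(\ref{eq:var_inf_strat_bk}) and applying the law of total variance to the within-block randomization). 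For each $j$, $\varestsbj = \frac{1}{K_j(K_j-1)}\sum_{k:n_k=m_j}(\taukest-\tausbjest)^2$ is exactly $1/K_j$ times the usual sample variance of the i.i.d.\ values $\{\taukest:n_k=m_j\}$, so $\EE[\varestsbj|\model_2,\{K_j\}] = v_j/K_j = \text{var}(\tausbjest|\model_2,\{K_j\})$. Since the weights in $\varestsbm=\frac{1}{n^2}\sum_{j=1}^J(m_j K_j)^2\varestsbj = \sum_{j=1}^J w_j^2\varestsbj$ are precisely these squared mixing weights, taking expectations term by term yields $\EE[\varestsbm|\model_2,\{K_j\}] = \sum_{j=1}^J w_j^2\,\text{var}(\tausbjest|\model_2,\{K_j\}) = \text{var}(\tauestbk|\model_2,\{K_j\})$, the claimed unbiasedness (and it specializes to the \cite{Imai2008} matched-pairs result when there is a single size class with $m_j=2$).

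The main obstacle is the i.i.d.\ reduction in the first step: one must argue carefully that, after conditioning on the size profile, the blocks within a size class together with their within-block randomizations are genuinely i.i.d., so that the sample-variance identity applies. Some care is needed because the within-block Neyman variances $\Sck,\Stk,\Stck$ are themselves random, depending on which stratum is drawn; the transparent route is to condition further on the realized sample $\mathcal{S}$, use that $\taukest$ then has mean $\taukfs$ and the usual within-block conditional variance, and average over the i.i.d.\ stratum draw --- although working directly with i.i.d.-ness of $(\taukest)$ and translation-invariance of the sample variance avoids ever decomposing $v_j$. The remaining steps are routine bookkeeping with the weights $w_j = m_j K_j/n$.
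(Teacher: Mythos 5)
Your proposal is correct and takes essentially the same route as the paper: condition on the counts of sampled strata of each size (equivalently, view the sampling as stratified by strata size), note that within a size class the block estimators are i.i.d.\ so the cross-block sample variance $\varestsbj$ is unbiased for $\text{var}(\tausbjest)$, and combine across the independent size classes with the squared weights $(m_jK_j/n)^2$ to get unbiasedness of $\varestsbm$. The paper packages the within-class step via the law of total variance together with the finite-sample bias formula for $\varestsbj$ and a standard sampling-theory identity, which is exactly the ``transparent route'' you mention as an alternative, so there is no substantive difference.
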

This result can be seen directly from the results in Supplementary Material~\ref{append:bias_var_est_sb_m_inf_strat}.

Alternatively, if we are willing to assume that block size is independent of treatment effect, then we have the following more general result:
\begin{theorem}[Unbiasedness of $\varestsbp$ given independence]\label{theorem:var_sbp_rand_samp}
In the random sampling of strata setting where block sizes are independent of block average treatment effects, $\varestsbp$ (Equation~\ref{eq:var_est_sb_p}) is an unbiased estimator for $\text{var}(\tauestbk | \model_2)$, assuming no blocks have $n_k \geq n/2$.
\end{theorem}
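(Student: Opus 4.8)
The plan is to use the tower property, first conditioning on the treatment assignment and then on the sampling, exactly as in the other infinite-population derivations of the paper. In the small-block setting relevant here we have $\numsbunits = n$, $\tausbest = \tauestbk$, and $\tausbfs = \taufs$. The variance decomposition over the sample $\mathcal{S}$ gives $\text{var}(\tauestbk\mid\model_2) = \EE[\text{var}(\tauestbk\mid\mathcal{S})\mid\model_2] + \text{var}(\taufs\mid\model_2)$, which is Equation~\eqref{eq:var_inf_strat_bk}. On the estimator side, Corollary~\ref{lemma:var_est_sb_p_fin_samp} gives, for every fixed sample on which $\varestsbp$ is defined, $\EE[\varestsbp\mid\mathcal{S}] = \text{var}(\tauestbk\mid\mathcal{S}) + \sum_{k} w_k (\taukfs - \taufs)^2$, where $w_k = \nk^2/\big[(n-2\nk)\,(n + \sum_{i} n_i^2/(n-2n_i))\big]$ are the weights appearing in Equation~\eqref{eq:var_est_sb_p}. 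Taking $\EE[\,\cdot\mid\model_2]$ of this and comparing with the decomposition, the theorem reduces to the single identity
\[
\EE\Big[\sum_{k=1}^{K} w_k\,(\taukfs - \taufs)^2 \;\Big|\; \model_2\Big] \;=\; \text{var}(\taufs\mid\model_2).
\]

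To prove this I would condition on the vector of sampled block sizes $\mathbf{N} = (\nk)_{k=1}^K$, noting that $n$ and every weight $w_k$ are functions of $\mathbf{N}$ alone. This is where the hypothesis enters: because block size is independent of block average treatment effect --- and because sampling blocks without replacement from an infinite population behaves like sampling with replacement --- conditioning on $\mathbf{N}$ leaves the realized block effects $\taukfs$, $k = 1,\dots,K$, as i.i.d.\ draws from the population distribution of block average treatment effects, with some common mean $\mu$ and variance, say $\sigma^2_\tau$. Writing $\bar w_k = \nk/n$, so that $\sum_k \bar w_k = 1$ and $\taufs = \sum_k \bar w_k\,\taukfs$, a short moment computation gives, conditionally on $\mathbf{N}$,
\[
\EE\big[(\taukfs - \taufs)^2 \mid \mathbf{N}\big] = \sigma^2_\tau\Big(1 - 2\bar w_k + \textstyle\sum_j \bar w_j^2\Big), \qquad \text{var}(\taufs \mid \mathbf{N}) = \sigma^2_\tau \textstyle\sum_j \bar w_j^2 .
\]
Since $\EE[\taufs \mid \mathbf{N}] = \mu$ is non-random, the variance decomposition over $\mathbf{N}$ yields $\text{var}(\taufs \mid \model_2) = \EE[\text{var}(\taufs \mid \mathbf{N}) \mid \model_2]$. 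Cancelling $\sigma^2_\tau$, the theorem thus collapses to the purely algebraic claim that, for every realization of $\mathbf{N}$,
\[
\sum_{k=1}^{K} w_k\Big(1 - 2\bar w_k + \textstyle\sum_j \bar w_j^2\Big) \;=\; \sum_{j=1}^{K} \bar w_j^2 .
\]

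This identity is where the optimization-derived form of $w_k$ does the work. Set $D = n + \sum_i n_i^2/(n - 2 n_i)$, so $w_k = \frac{1}{D}\cdot \frac{\nk^2}{n - 2\nk}$. Then $\sum_k w_k = \frac{1}{D}\sum_k \frac{\nk^2}{n - 2\nk} = 1 - n/D$, and using the partial-fraction rewriting $\frac{\nk^3}{n - 2\nk} = \tfrac12\big(-\nk^2 + \frac{n\,\nk^2}{n - 2\nk}\big)$ one obtains $\sum_k w_k\bar w_k = \tfrac12\big(1 - \frac{n(1 + S)}{D}\big)$, where $S := \sum_j \bar w_j^2$. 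Substituting both into $(1 + S)\sum_k w_k - 2\sum_k w_k\bar w_k$, which is the left-hand side of the identity, the $D$-terms cancel and leave exactly $S$. The tower property then reassembles the pieces into $\EE[\varestsbp \mid \model_2] = \text{var}(\tauestbk \mid \model_2)$.

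The substantive obstacle is the decoupling step: the computation of $\EE[(\taukfs - \taufs)^2\mid\mathbf{N}]$ uses that, given $\mathbf{N}$, the $\taukfs$ are i.i.d.\ draws from a \emph{single} distribution; if block size and block effect are dependent, the conditional law of $\taukfs$ given $\mathbf{N}$ depends on $\nk$, mean-shift terms of the form $(\mu(\nk) - \sum_j \bar w_j \mu(n_j))^2$ survive, and the identity fails. This is precisely the hypothesis of the theorem, and the reason $\varestsbp$ is only conservative under $\model_2$ in general (compare the conditioned alternative, Corollary~\ref{lem:rand_samp_strat_varsbm}). The weight identity is routine algebra, but it is not automatic --- it depends on $w_k$ having exactly the reciprocal form produced by the optimization in the Supplementary Material --- so that factorization should be verified before trusting the cancellation. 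For a hybrid experiment with both block types, the same argument applies verbatim to the small-block sub-sample, with $\numsbunits$ playing the role of $n$.
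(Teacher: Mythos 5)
Your proposal is correct and follows essentially the same route as the paper's own proof: both reduce the claim, via the finite-sample bias expansion of $\varestsbp$ and the decomposition $\text{var}(\tauestbk\mid\model_2)=\EE[\text{var}(\tauestbk\mid\mathcal{S})\mid\model_2]+\text{var}(\taufs\mid\model_2)$, to showing that $\EE\left[\sum_k w_k(\taukfs-\taufs)^2\mid\model_2\right]=\text{var}(\taufs\mid\model_2)$ under the independence assumption, with the work done by the algebra of the optimization-derived weights. The only organizational difference is that you condition on the size vector and use centered moments ($\sigma^2_\tau$), which collapses the paper's two coefficient conditions (the per-block identity $a_k-2a_k\nk/n+(\nk^2/n^2)\sum_j a_j=\nk^2/n^2$ together with a cross-term identity, the latter being exactly the former summed over $k$) into your single, correctly verified identity $\sum_k w_k\left(1-2\bar w_k+S\right)=S$.
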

The proof is in Supplementary Materials~\ref{append:varsbp_rand_samp}.

\begin{remark}
We may also consider an infinite number of strata of infinite size, as is commonly used in multisite randomized trials.
This is the setting considered in \cite{schochet_2016} and the RCT-YES software \citep[][]{schochet_2016} estimator discussed in Supplementary Material~\ref{subsubsec:rct_yes_est} could be used.
The sampling scheme then has two steps: first sample the strata, then sample units from the strata.
To discuss variance, we need to add a bit of notation.
Let $\tau_{\mathcal{S}}^*$ denote the expectation of the treatment effect estimator given the blocks in the sample.
That is, we fix which strata are in the sample and take the expectation over the sampling of units from the infinite size strata.
So conditioning on which strata are in the sample we are in a stratified sampling set up.
Let this framework be denoted by $\model_{3}$.
Then the variance of $\tauestbk$ is
\begin{align*}
\text{var}\left(\tauestbk|\model_{3}\right) 
&= \EE\left[\sum_{k:B_k=1}\frac{n_k^2}{n^2}\left(\frac{\sigmack}{\nck} + \frac{\sigmatk}{\ntk}\right)|\model_{3}\right] + \text{var}\left(\tau_{\mathcal{S}}^*|\model_{3}\right).
\end{align*}
It is straightforward to extend the results of Corollary~\ref{lem:rand_samp_strat_varsbm} and Theorem~\ref{theorem:var_sbp_rand_samp} to this case.
\end{remark}

\subsection{Discussion}\label{subsec:tru_var_disc}

While the variance formulas that we presented above share a similar structure with each other and the finite sample forms, there are important differences.
In the finite sample framework (Equation~\ref{eq:var_fs_bk}), there is a term regarding treatment effect variation that reduces the variance due to the correlation of potential outcomes.
This term is retained in the random sampling of strata framework of Section \ref{subsec:inf_finstrat_framework} but not in the stratified sampling framework of Section \ref{subsec:inf_infstrat_framework}.
This difference in the true variance implies that different variance estimators may be more appropriate in different settings.
It also suggests comparisons of blocking to complete randomization under these different assumptions will also diverge; for further discussion on this, see \cite{pashley2020bkcr}.
In fact, this difference explains much of apparent discrepancy between the matched pairs literature and the blocking literature.

Relatedly, different variance estimators can have different amounts of bias depending on the framework being used. 
The small blocks estimators ($\varestsbm$ and $\varestsbp$) in the finite sample and the stratified sampling framework are unbiased if the average treatment effect is the same across all of the small blocks (or all of the small blocks of the same size for $\varestsbm$) and otherwise are more conservative as the variance of the average treatment effects across blocks increases.
For the infinite number of strata framework, under some assumptions all of our small block variance estimators are unbiased. 
We have no small block estimator that is guaranteed to be unbiased or conservative for the simple random sampling (flexible block) framework, though we present one for big blocks.

The big blocks estimator ($\varestbkone$) in the finite sample is unbiased if the treatment effect is additive within each block and otherwise depends on the treatment effect heterogeneity within each block.
In the stratified sampling framework, however, $\varestbkone$ will be unbiased.

Overall, only the framework of Section~\ref{subsec:inf_finstrat_framework} of sampling structural blocks, with the additional assumption of independence of impact and block size given there, has unbiased variance estimators for a mixture of big and small blocks.
This means that, without additional assumptions allowing for plug-in approaches, the hybrid estimators, where possible, will always be conservative.

\section{Simulations}\label{sec:sim}
We compare different estimators of the variance for hybrid blocked experiments where there are a few big blocks and many small blocks in a finite sample context.
We explore a context where 50\% of our units are in small blocks, each with only one treated unit, and the remainder are in big blocks with at least two treated units.
None of our blocks have many treated units due to only having approximately 20\% of the units treated overall.
(The 20\% was approximate in order to create varying size small blocks to see the different performance of the hybrid estimators.)
We have 15 blocks with sizes ranging from 3 to 20.

The simulations presented here are for the finite sample framework, as it is both a common mode of inference as well as a core building block to the population frameworks.
These results, however, are largely applicable to these other settings.
For instance, the biases for the small blocks variance estimators have the same form for the finite sample and the stratified sampling frameworks.

We considered our two hybrid estimators, which correspond to estimating the variance of the small blocks two different ways.
We also considered two regression estimators: the HC1 sandwich estimate \citep{hinkley1977jackknifing} from a linear model with fixed effects and no interaction between treatment indicator and blocking factor, and the standard variance estimate (inverse Fisher information) from a weighted regression, weighting each unit by the inverse probability of being assigned to its given treatment status in its block, multiplied by the overall proportion of units in its treatment group (this is a variant of the approach in \citet{gerber_green_text}; see also \citet{miratrix_weiss}).\footnote{There are actually different weighting approaches one can use in regression adjustment; in particular one can use precision weighting or survey weighting. In additional explorations we examined survey weighting as implemented by \texttt{svyglm}, and found these other options generally performed more poorly, with some approaches resulting in substantial underestimation of variance and others having a great deal of inflation.}
Note that the HC1 estimator is the ``robust'' estimator used in Stata \citep{statacorp2017stata} for estimating standard deviations.

In our simulations, we varied both to what extent blocking successfully separated units based on their potential outcomes under control and also on their treatment effects.
The average potential outcome under control and the average treatment effect for each block were both negatively correlated with block size, so that smaller blocks had larger control potential outcomes and larger treatment effects.
The correlation of potential outcomes within blocks was also varied between $\rho = 0, 0.5, \text{ and }1$.
See Supplementary Material~\ref{append:more_sim} for more on the data generating process.

We compared all of the variance estimators to the actual variance of the corresponding blocking treatment estimator in Figure \ref{plot:rel_bias} by looking at the percent relative bias ($[\text{mean}(\hat{\sigma}^2_*) -\text{var}(\tauestbk|\mathcal{S})]/\text{var}(\tauestbk|\mathcal{S})$).\footnote{We compare all estimators to the variance of $\tauestbk$ to put everything on the same scale, even though the sandwich estimate for a linear model is estimating the variance of the linear model estimator, which is not generally the same.}
The variation due to changing the between block difference in the mean of control potential outcomes was found to be minimal so we average these differences on the plots.
The two hybrid estimators, the one using $\varestsbm$ (Equation~\ref{eq:var_est_sb_m}) ($\text{Hybrid}_{\text{m}}$) for the small blocks and the one using $\varestsbp$ (Equation~\ref{eq:var_est_sb_p}) ($\text{Hybrid}_{\text{p}}$) for the small blocks, outperform the linear model estimators, especially as the treatment effect variation across the blocks increases.
We see that $\text{Hybrid}_{\text{m}}$ also has lower bias than $\text{Hybrid}_{\text{p}}$ as treatment heterogeneity increases.
This is because the value of treatment effects are correlated with block size and $\varestsbm$ groups variance estimation by block size.
Weighted regression performance was generally similar to that of $\text{Hybrid}_{\text{p}}$, although slightly anti-conservative for samples with low treatment heterogeneity when $\rho = 1$.

For discussion of the variance of the variance estimators, see Supplementary Material~\ref{append:var_est_var}.
The variance estimators' variances were found to be comparable, with weighted regression generally the most stable.

When comparing the performance of estimators, there is an important note about the linear model estimator: the sandwich estimate for a linear model is associated with a different treatment effect estimator than the others.
In particular, a linear model with fixed effects is estimating a precision weighted estimate of the treatment effect across the blocks. 
It is well known that as treatment heterogeneity increases, this estimator can become increasingly biased.
See, \cite{raudenbush2020randomized} for a longer discussion on this and related estimators.
This is not an issue for the weighted regression which, similar to adding interactions between treatment and block dummy variables, will recover $\tauestbk$.

\begin{figure}[h!]
\centering
\includegraphics[scale=1.1]{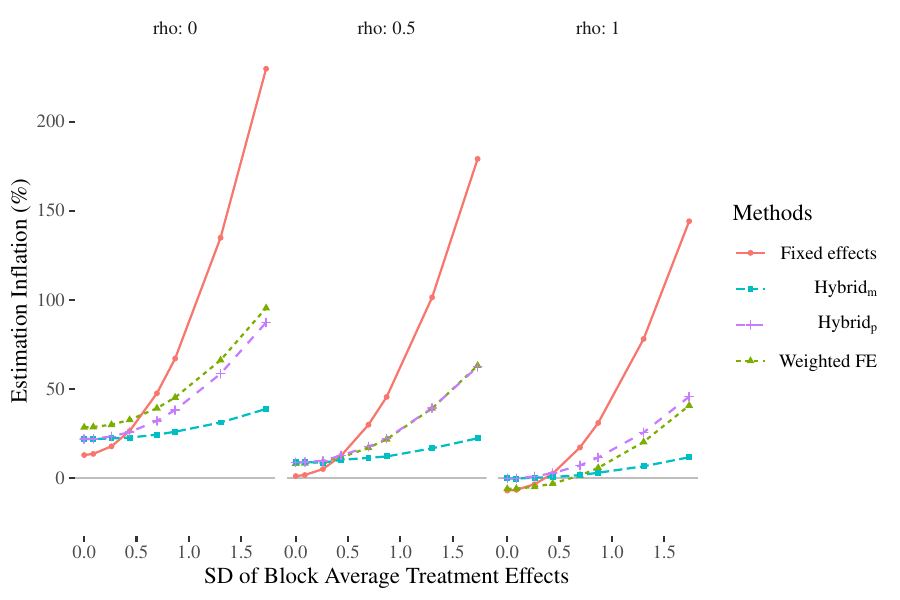}
\caption{Simulations to assess variance estimators' relative bias as a function of treatment variation across blocks. 
Each column represents a different value of $\rho$, with values denoted at the top of the graph.
The x-axis shows the standard deviation of block treatment effects.
Dots indicate average over changes in control means for specific finite samples.
FE stands for fixed effects.
}
\label{plot:rel_bias}
\end{figure}

\section{Data Example}\label{sec:data_exp}
One area where analysts are often faced with many small blocks of varying sizes is found in the matching literature. 
In particular, full matching  (see \cite{hansen_full_match}, \cite{rosenbaum1991characterization}) finds sets of similar units, with either one treated and several control or vice versa, that could be considered as-if randomized. 
After matching, a researcher could then analyze these data using permutation tests and associated sensitivity checks (see, e.g. \cite{rosenbaum_2010}), but in this context generating confidence intervals or standard errors using permutation inference would typically rely on a constant treatment effect assumption across the blocks. 
One might alternatively wish for a Neyman-style randomization analysis such as would be typically done for large block experiments to obtain inference for the average effect in the presence of treatment variation.
The average treatment effect estimate is easy to obtain; it is the uncertainty estimation that causes the trouble.
Our small block variance estimators fills this gap.
To illustrate, we analyze a data set from the National Health and Nutrition Examination Survey (NHANES) 2013-\citeyear{nhanes} given in the \texttt{CrossScreening} package \citep[][]{crossscreening} in \texttt{R} statistical software \citep{Rsoft}.
This data set was also used by \cite{zhao2017cross} to analyze the effect of high fish consumption (defined as 12 or more servings of fish or shellfish in the previous month) versus low fish consumption (defined as 0 or 1 servings of fish or shellfish in the previous month) on a number of biomarkers.

Although \cite{zhao2017cross} analyzed numerous outcomes, we focus on a measure of mercury (LBXGM), converted to the $\text{log}_2$ scale, as a simple illustration of our methods.
We use unrestricted full matching to obtain a set of all small blocks of varying size.
As in \cite{zhao2017cross}, we matched on smoking, age, gender, race, income, and education.
We used Bayesian logistic regression through the \texttt{brglm} package \citep[][]{brglm} and \texttt{optmatch} \citep[][]{optmatch} in \texttt{R} \citep{Rsoft}.
This resulted in 197 blocks with only one treated or one control unit in each.
Sizes of blocks ranged from 2 to 47.
This type of matching would fall into the category of flexible blocks, and we here focus on estimation of the SATE for the finite sample.

\begin{table}
\centering
\begin{tabular}{lcc|cc}
 & \multicolumn{2}{c|}{NHANES} & \multicolumn{2}{c}{Lalonde}  \\
Estimator & Estimate & $\widehat{SE}$ &  Estimate & $\widehat{SE}$ \\
\hline
Hybrid blocking with $\varestsbm$ 			& 2.45 &	N/A 		& \$560 & \$570 \\
Hybrid blocking with $\varestsbp$ 			& 2.45 &	0.20 	& \$560 & \$606 \\
Weighted regression				 			& 2.45 & 0.11	& \$560	& \$560 \\
Fixed effects regression (HC1) 				& 2.75 & 0.13 	& \$425	& \$601 \\
\end{tabular}
\caption{Results of NHANES (full matching), and Lalonde (CEM) for different estimation strategies.}
\label{table:data_ex}
\end{table}

There were some block sizes that were unique, so the hybrid estimator with $\varestsbm$ could not be used.
Alternate forms of full matching could potentially avoid this concern: full matching can include additional restrictions, such as using only a portion of the control group or exact matching on some important covariates, which could make the block sizes more homogenous \citep{optmatch}; for simplicity we do not explore this here.
The blocking treatment effect estimate ($\tauestbk$) was 2.45 but using a fixed effects model with no interaction the treatment effect estimate was 2.75.
Looking at Table~\ref{table:data_ex}, we see that our hybrid estimator using $\varestsbp$ gave a much larger variance estimate (relative to the scale of the precision estimates) than the two linear model based variance estimators.

A second method for analyzing observational datasets where our variance estimators could be useful is coarsened exact matching (CEM).
CEM coarsens covariates used to match and then exactly matches to these coarsened variables \citep[][]{iacus2012causal}.
We follow the example from the vignette of the \texttt{cem} package \citep[][]{cem} in \texttt{R} using the most automated version of CEM on the classic LaLonde data set \citep[][]{lalonde1986evaluating}, available in the \texttt{cem} package.
This data set consists of individuals who received or did not receive a job training program with the outcome of interest as earnings in 1978.
We use the unmodified version of the LaLonde dataset, but otherwise follow the automated process for CEM laid out in the vignette to create blocks (we do not follow the analysis).
This resulted in the creation of 69 blocks, some small and some big, with some ungrouped units being dropped.
The blocking treatment effect estimate ($\tauestbk$) was  \$560 but using a fixed effects model with no interaction the treatment effect estimate was \$425.
From Table~\ref{table:data_ex}, the precision estimates from all methods were similar though, again, the hybrid estimator using $\varestsbp$ was the largest and likely the most conservative.

\section{Discussion}\label{sec:disc}

Blocking can be viewed under a wide variety of population frameworks ranging from a fixed, finite-sample model to one where we envision the units as being sampled from a larger population in pre-set groups.
Because different types of blocking tend to use different frameworks, there has not been good guidance on how to proceed when faced with some singleton units in some blocks and not in others.

We have worked to bring the different frameworks together in order to compare them systematically.
We identified and compared the true variance of a blocking-based estimator under multiple settings, and created corresponding estimators of the impact estimator's variance.
We also provide simple, model-free variance estimators for two types of experiments that have not received much attention: blocked experiments with variable-sized blocks containing singleton treatment or control units, and hybrid blocked experiments with large and small blocks combined. 
These contexts are quite common, frequently appearing in, for example, the matching literature.
We analyzed the performance of both our new variance estimators and the classic variance estimators under different frameworks, identifying when they are unbiased or conservative.
This investigation again illustrates how different sampling frameworks and block types can impact assessments of an estimator's performance.

Future work includes extending these results to other population settings and sampling methods, in particular finding small block estimators for the setting of constructing blocks post-sampling and pre-randomization. 
Variance estimation is also a missing and needed piece in post-stratification research, as noted in \cite{Miratrix2013}.
Although conditional answers for post-stratification would correspond to the estimators presented in this work, the unconditional case remains an open extension.

\bibliographystyle{apalike}
\bibliography{poststratref}{}

\begin{thebibliography}{}

\bibitem[Abadie and Imbens, 2008]{abadie2008}
Abadie, A. and Imbens, G.~W. (2008).
\newblock Estimation of the conditional variance in paired experiments.
\newblock {\em Annales d'\'{E}conomie et de Statistique}, 91/92:175--187.

\bibitem[Aronow et~al., 2014]{aronow2014sharp}
Aronow, P.~M., Green, D.~P., Lee, D.~K., et~al. (2014).
\newblock Sharp bounds on the variance in randomized experiments.
\newblock {\em The Annals of Statistics}, 42(3):850--871.

\bibitem[{Centers for Disease Control and Prevention (CDC). National Center for
  Health Statistics (NCHS)}, 2014]{nhanes}
{Centers for Disease Control and Prevention (CDC). National Center for Health
  Statistics (NCHS)} (2013-2014).
\newblock {National Health and Nutrition Examination Survey Data}.
\newblock {\em Hyattsville, MD: U.S. Department of Health and Human Services,
  CDC}.

\bibitem[Cochran, 1953]{cochran_1953}
Cochran, W.~G. (1953).
\newblock Matching in analytical studies.
\newblock {\em American Journal of Public Health and the Nations Health},
  43(6\_Pt\_1):684--691.

\bibitem[Cochran, 1977]{CochranWilliamG.WilliamGemmell1977St}
Cochran, W.~G. (1977).
\newblock {\em Sampling techniques}.
\newblock {Wiley Series in Probability and Mathematical Statistics-Applied}.
  John Wiley \& Sons, New York, 3d edition.

\bibitem[Cochran and Cox, 1950]{cochran1950}
Cochran, W.~G. and Cox, G.~M. (1950).
\newblock {\em Experimental Designs}.
\newblock John Wiley \& Sons, New York, NY.

\bibitem[Ding et~al., 2017]{ding2017bridging}
Ding, P., Li, X., and Miratrix, L.~W. (2017).
\newblock Bridging finite and super population causal inference.
\newblock {\em Journal of Causal Inference}, 5(2).

\bibitem[Fisher, 1926]{Fisher1992}
Fisher, R.~A. (1926).
\newblock The arrangement of field experiments.
\newblock {\em Journal of Ministry of Agriculture}, 33:503--513.

\bibitem[Fogarty, 2018]{fogarty_2017}
Fogarty, C.~B. (2018).
\newblock On mitigating the analytical limitations of finely stratified
  experiments.
\newblock {\em J. Roy. Statist. Soc. Ser. B}, 80(5):1035--1056.

\bibitem[Freedman, 008a]{freedman_2008a}
Freedman, D.~A. (2008a).
\newblock On regression adjustments to experimental data.
\newblock {\em Advances in Applied Mathematics}, 40(2):180--193.

\bibitem[Freedman, 008b]{freedman_2008b}
Freedman, D.~A. (2008b).
\newblock On regression adjustments in experiments with several treatments.
\newblock {\em Ann. Appl. Stat.}, 2(1):176--196.

\bibitem[Gerber and Green, 2012]{gerber_green_text}
Gerber, A.~S. and Green, D.~P. (2012).
\newblock {\em Field Experiments: Design, Analysis and Interpretation}.
\newblock Norton, New York.

\bibitem[Hansen, 2004]{hansen_full_match}
Hansen, B.~B. (2004).
\newblock Full matching in an observational study of coaching for the {SAT}.
\newblock {\em J. Amer. Statist. Assoc.}, 99(467):609--618.

\bibitem[Hansen and Klopfer, 2006]{optmatch}
Hansen, B.~B. and Klopfer, S.~O. (2006).
\newblock Optimal full matching and related designs via network flows.
\newblock {\em J. Comput. Graph. Statist}, 15(3):609--627.

\bibitem[Hinkley, 1977]{hinkley1977jackknifing}
Hinkley, D.~V. (1977).
\newblock Jackknifing in unbalanced situations.
\newblock {\em Technometrics}, 19(3):285--292.

\bibitem[Iacus et~al., 2012]{iacus2012causal}
Iacus, S.~M., King, G., and Porro, G. (2012).
\newblock Causal inference without balance checking: Coarsened exact matching.
\newblock {\em Political Analysis}, 20(1):1--24.

\bibitem[Iacus et~al., 2016]{cem}
Iacus, S.~M., King, G., and Porro, G. (2016).
\newblock {\em cem: Coarsened Exact Matching}.
\newblock R package version 1.1.17.

\bibitem[Imai, 2008]{Imai2008}
Imai, K. (2008).
\newblock Variance identification and efficiency analysis in randomized
  experiments under the matched-pair design.
\newblock {\em Stat. Med.}, 27(24):4857--4873.

\bibitem[Imai et~al., 2008]{imai_king_stuart_2008}
Imai, K., King, G., and Stuart, E.~A. (2008).
\newblock Misunderstandings between experimentalists and observationalists
  about causal inference.
\newblock {\em J. Roy. Statist. Soc. Ser. A}, 171(2):481--502.

\bibitem[Imbens, 2011]{imbens2011experimental}
Imbens, G.~W. (2011).
\newblock Experimental design for unit and cluster randomid trials.
\newblock {\em Conf. International Initiative for Impact Evaluation,
  Cuernavaca}.

\bibitem[Imbens and Rubin, 2015]{CausalInferenceText}
Imbens, G.~W. and Rubin, D.~B. (2015).
\newblock {\em Causal Inference for Statistics, Social, and Biomedical
  Sciences: An Introduction}.
\newblock Cambridge University Press, New York.

\bibitem[Kosmidis, 2017]{brglm}
Kosmidis, I. (2017).
\newblock {\em {brglm}: Bias Reduction in Binary-Response Generalized Linear
  Models}.
\newblock R package version 0.6.1.

\bibitem[LaLonde, 1986]{lalonde1986evaluating}
LaLonde, R.~J. (1986).
\newblock Evaluating the econometric evaluations of training programs with
  experimental data.
\newblock {\em The American Economic Review}, 76(4):604--620.

\bibitem[Lin, 2013]{winston_lin}
Lin, W. (2013).
\newblock Agnostic notes on regression adjustments to experimental data:
  Reexamining {F}reedman's critique.
\newblock {\em Ann. Appl. Stat.}, 7(1):295--318.

\bibitem[Lohr, 2009]{lohr2010sampling}
Lohr, S.~L. (2009).
\newblock {\em Sampling: {D}esign and {A}nalysis}.
\newblock Cengage Learning, Boston, 2nd edition.

\bibitem[Miratix and Pashley, 2020]{blkvar}
Miratix, L.~W. and Pashley, N.~E. (2020).
\newblock blkvar.
\newblock \url{https://rdrr.io/github/lmiratrix/blkvar/}.

\bibitem[Miratrix et~al., 2020]{miratrix_weiss}
Miratrix, L., Weiss, M., and Henderson, B. (2020).
\newblock An applied researcher's guide to estimating effects from multisite
  individually randomized trials: Estimands, estimators, and estimates.
\newblock Working paper.

\bibitem[Miratrix et~al., 2018]{miratrix_survey}
Miratrix, L.~W., Sekhon, J.~S., Theodoridis, A.~G., and Campos, L.~F. (2018).
\newblock Worth weighting? {How} to think about and use weights in survey
  experiments.
\newblock {\em Political Analysis}, 26(3):275--291.

\bibitem[Miratrix et~al., 2013]{Miratrix2013}
Miratrix, L.~W., Sekhon, J.~S., and Yu, B. (2013).
\newblock Adjusting treatment effect estimates by post-stratification in
  randomized experiments.
\newblock {\em J. Roy. Statist. Soc. Ser. B}, 75(2):369--396.

\bibitem[Mukerjee et~al., 2018]{mukerjee2018using}
Mukerjee, R., Dasgupta, T., and Rubin, D.~B. (2018).
\newblock Using standard tools from finite population sampling to improve
  causal inference for complex experiments.
\newblock {\em J. Amer. Statistic. Assoc.}, 113(522):868--881.

\bibitem[Pashley and Miratrix, 2020]{pashley2020bkcr}
Pashley, N.~E. and Miratrix, L.~W. (2020).
\newblock Block what you can, except when you shouldn{'}t.
\newblock Working paper.

\bibitem[{R Core Team}, 2016]{Rsoft}
{R Core Team} (2016).
\newblock {\em R: A Language and Environment for Statistical Computing}.
\newblock R Foundation for Statistical Computing, Vienna, Austria.

\bibitem[Raudenbush and Schwartz, 2020]{raudenbush2020randomized}
Raudenbush, S.~W. and Schwartz, D. (2020).
\newblock Randomized experiments in education, with implications for multilevel
  causal inference.
\newblock {\em Annual Review of Statistics and Its Application}, 7:177--208.

\bibitem[Rosenbaum, 1991]{rosenbaum1991characterization}
Rosenbaum, P.~R. (1991).
\newblock A characterization of optimal designs for observational studies.
\newblock {\em J. Roy. Statist. Soc. Ser. B}, 53(3):597--610.

\bibitem[Rosenbaum, 2010]{rosenbaum_2010}
Rosenbaum, P.~R. (2010).
\newblock {\em Design of Observational Studies}.
\newblock Springer Series in Statistics. Springer, New York.

\bibitem[Rosenbaum and Zhao, 2017]{crossscreening}
Rosenbaum, P.~R. and Zhao, Q. (2017).
\newblock {\em CrossScreening: Cross-Screening in Observational Studies that
  Test Many Hypotheses}.
\newblock R package version 0.1.1.

\bibitem[Rubin, 1974]{rubin_1974}
Rubin, D.~B. (1974).
\newblock Estimating causal effects of treatments in randomized and
  nonrandomized studies.
\newblock {\em Journal of Educational Psychology}, 66(5):688--701.

\bibitem[Rubin, 1980]{rubin_1980}
Rubin, D.~B. (1980).
\newblock Randomization analysis of experimental data: {T}he {F}isher
  randomization test comment.
\newblock {\em J. Amer. Statist. Assoc.}, 75(371):591--593.

\bibitem[S{\"a}rndal et~al., 2003]{sarndal2003model}
S{\"a}rndal, C.-E., Swensson, B., and Wretman, J. (2003).
\newblock {\em Model assisted survey sampling}.
\newblock Springer, New York.

\bibitem[S{\"a}vje, 2015]{savje2015performance}
S{\"a}vje, F. (2015).
\newblock The performance and efficiency of threshold blocking.
\newblock {\em arXiv preprint arXiv:1506.02824}.

\bibitem[Schochet, 2016]{schochet_2016}
Schochet, P.~Z. (2016).
\newblock Statistical theory for the {RCT-YES} software: Design-based causal
  inference for {RCT}s, {S}econd {E}dition.
\newblock Technical Report (NCEE 2015-4011), Washington, DC: U.S. Department of
  Education, Institute of Education Sciences, National Center for Education
  Evaluation and Regional Assistance, Analytic Technical Assistance and
  Development.

\bibitem[Scosyrev, 2014]{scosyrev_2014}
Scosyrev, E. (2014).
\newblock Causal inference in block-randomized experiments: Analysis based on
  {Neyman's} stochastic causal model.
\newblock Unpublished.

\bibitem[Splawa-Neyman et~al., 1990]{neyman_1923}
Splawa-Neyman, J., Dabrowska, D.~M., and Speed, T. (1923/1990).
\newblock On the application of probability theory to agricultural experiments.
  {E}ssay on principles. {S}ection 9.
\newblock {\em Statist. Sci.}, 5(4):465--472.

\bibitem[StataCorp, 2017]{statacorp2017stata}
StataCorp (2017).
\newblock Stata {S}tatistical {S}oftware: {R}elease 15.
\newblock {\em StataCorp. College Station, TX: StataCorp}.

\bibitem[Wu and Hamada, 2000]{WuHamada}
Wu, C. F.~J. and Hamada, M.~S. (2000).
\newblock {\em Experiments : Planning, Analysis, and Parameter Design
  Optimization}.
\newblock John Wiley \& Sons, New York, NY.

\bibitem[Zhao et~al., 2018]{zhao2017cross}
Zhao, Q., Small, D.~S., and Rosenbaum, P.~R. (2018).
\newblock Cross-screening in observational studies that test many hypotheses.
\newblock {\em J. Amer. Statist. Assoc.}, 113(523):1070--1084.

\end{thebibliography}

\appendix
\begin{center}
{\bf \Large  Supplementary Material\\for\\``Insights on Variance Estimation for Blocked and Matched Pairs Designs''}

\vspace{0.5cm}
{ \large Nicole E. Pashley and Luke W. Miratrix}
\end{center}

\medskip

These Supplementary Material primarily contain detailed derivations of the results in the main paper, as well as some additional results and discussion. 
We first give some detail on the notational elements in the main paper. We then proceed with a non-technical discussion of alternate estimators for variance estimation.
We next provide additional results and further details on simulations.
We finally give derivations for the results in the main paper.
More detailed proofs for some of these sections are available upon request.

\addcontentsline{toc}{section}{Introduction}

\section{Alternative strategies for variance estimation}\label{append:var_est_alt}
In the main paper we examined strategies for variance estimation that put no structure on how the individual blocks may differ from each other.
At root, the focus is on estimating the residual variance of units around their block means, and aggregating appropriately.

This section discusses alternatives along with when they might be more or less appropriate.
The first two subsections describe estimators that require model assumptions.
These estimators may perform well in certain circumstances (i.e., those where the model assumptions hold), but rely on assumptions that we do not make in our analysis.
They can therefore perform very poorly under misspecification.
The third subsection describes an estimator used for matched pairs that is proposed in the RCT-YES software documentation.
This RCT-YES estimator assumes a specific population model that we do not consider in our paper.

\subsection{Linear regression}\label{subsubsec:var_est_lin_reg}
Perhaps the most common method of estimation for randomized trials is to simply fit a linear model to the data with a treatment indicator and a dummy variable for each block.
If there is no interaction between the treatment and block dummies, this approach will produce a precision-weighted estimate of the treatment effect, with an overall implicit estimand of a weighted average of the average impacts within each block, weighted by the estimated block precision under a homoscedasticity assumption.
If there is impact variation correlated with this precision, then this precision-weighted estimand could be different than the overall ATE, resulting in a biased estimator.
Furthermore, if blocks have different proportions of treated units and different sizes, this weighting might not correspond to any easily interpretable quantity.
As pointed out by \citeauthor{freedman_2008b}, this regression model is also not justified by randomization, which results in complications with the corresponding standard errors (\citeauthor{freedman_2008a},~2008a,b).
Unfortunately, however, this approach is likely the most common in the field.

The above issues are, in part, repairable. 
\cite{winston_lin} shows that the estimator from a linear model including interactions between the treatment indicator and block dummies is unbiased.
In fact, this estimator is equivalent to the blocked estimator presented in this paper.
The question is then how to estimate the standard errors from within the ordinary least squares framework.
Lin advocates a Huber-White sandwich estimator for the general covariate case, but these have problems when the blocks have single treated or single control units.
In particular, several variants of these estimators, such as HC2 and HC3, will not even be defined due to the characteristics of the corresponding design matrix.
The HC0 estimator can still be heavily biased if there is systematic heteroscedasticity across the blocks.
\cite{gerber_green_text} (p. 116-117) advocate a weighted estimator, but this can also fail in the presence of blocks with singleton treated or control units.

\subsection{Pooling variance estimates}\label{subsec:var_est_plg}
As we have seen, the driving idea behind the big block estimator to get an overall variance is to obtain variance estimates for all of the block specific estimates, and then combine them in a weighted average.
This does not work for small blocks, as we do not have enough units to estimate variances of one of the treatment arms.
If we had such estimates, however, we could aggregate as before.

One way forward is to use the variance estimates in other blocks to estimate the variances in the intractable blocks.
For example, if treatment effects within block were considered constant, one could use the variance estimate of the larger of the treatment or control group of the block as the variance of the other treatment arm.
Alternatively, if given a means of assessing how similar blocks are in terms of their variance, one could simply use the variance of the closest big block for each small block.
This typically requires some assumptions that, based on covariate values of the blocks, the variances of the potential outcomes are the same or similar.
Similarly, Abadie and Imbens created an estimator of the variance for matched pairs that involves pairing the closest matched pairs and creating a pooled variance estimator for the two blocks together \citep[][]{abadie2008}.
They found that their estimator was asymptotically unbiased given certain conditions, such as the closeness of pairs increasing as the sample size grew.
Although the asymptotic results derived in their paper are not necessarily appropriate here, this could be a reasonable plug-in estimator under the assumptions that (i) the covariate(s) that create the strata are related to the potential outcomes and variance and that (ii) the small strata are more similar to each other than the larger strata.

Covariates could also be exploited using linear regression to predict variance for the intractable blocks or estimate a variance model for all blocks in a pooled manner; see, for example, \cite{fogarty_2017}.
If we believe that the variance of the estimator in each block is related to the block size, we could fit a linear regression for the big blocks, of variance versus their size, and then extrapolate to the small blocks.
Alternatively, if nothing is known and there are very few small blocks, an average (or the largest) of the big block variance estimates might be used.

Any of these plug-in estimators can either be used for all the blocks or used to simply fill in any missing components of the small blocks.
For instance, if all of the small blocks are such that they have multiple controls but only one treated unit, we can calculate $\sckest$ as usual but approximate $\stkest$ based on one of the previously mentioned methods.

In general, many plug-ins could be appropriate, based on what assumptions the researcher is able to make.
The choice of plug-in estimator should be chosen prior to running the experiment and should be based on the researchers assumptions and knowledge at that time.
Trying several plug-in estimators and using the smallest will create bias. 

\subsection{The RCT-YES estimator}\label{subsubsec:rct_yes_est}
One might also consider an estimator suggested in the RCT-YES manuscript \citep[][p. 83]{schochet_2016}.
The form of this estimator, using block sizes as weights, is
\[\widehat{\sigma}^2_{RCT} = \frac{1}{K\left(K-1\right)(\frac{n}{K})^2}\sum_{k=1}^K\left(n_k\taukest - \frac{n}{K}\tauestbk\right)^2 . \]
As discussed and proven in the RCT-YES documentation, this estimator is consistent under an infinite population of an infinite number of strata of infinite size, where we sample strata and then units within strata.
This is the random sampling of strata setting in the remark of Section~\ref{subsec:inf_finstrat_framework}.
This estimator differs from our estimators $\varestsbm$ and $\varestsbp$ by putting the weights inside the square.
Unfortunately, moving the weighting inside the squares can cause large bias in the finite setting and the stratified sampling framework.
In fact, in the simulations comparing variance estimator performance in the finite sample, presented in Section~\ref{sec:sim}, the RCT-YES bias and variance was high enough that it was not comparable to the other estimators presented.
This estimator is targeting a superpopulation quantity, thus the standard errors are larger in part to capture the additional variation of the strata being a random sample.

We discussed the performance of the original RCT-YES estimator with Dr. Schochet (personal correspondence, April 2018), and he proposed an alternate estimator.
Again using the block sizes as weights, this estimator has the form
\[\widehat{\sigma}^2_{RCT, 2} = \frac{1}{K\left(K-1\right)(\frac{n}{K})^2}\sum_{k=1}^K\nk^2\left(\taukest - \tauestbk\right)^2 \]
and is rooted in survey sampling methods \citep{CochranWilliamG.WilliamGemmell1977St}.
This estimator is more stable because the weights are not inside the parentheses.
This estimator is still motivated by a superpopulation sampling framework, and takes the variability of the blocks into account.
Finite sample performance using this estimator on all of the blocks does not perform well, unless all blocks have the same $\tau_k$, which aligns with what we expect from explorations of our small block estimators.
If used in combination as a hybrid estimator, its performance is very similar to that of the hybrid using $\varestsbp$.

\section{Connecting to \cite{mukerjee2018using}}\label{supp:muk_et_al_paper}

\cite{mukerjee2018using} sets up a framework for representing conservative variance estimators for a very general set of causal estimands and estimators.
The big block variance estimator, $\varestbkone$, falls within their framework, as shown in their paper.
We can also show that the small block estimators $\varestsbs$ and $\varestsbm$ do as well.
However, it does not seem that our other small block estimator, $\varestsbp$, is within their framework scope.
In fact, it appears necessary to go beyond this framework, as we have done with $\varestsbp$, to obtain a conservative variance estimator for small blocks of variable size when there are not multiple blocks of the same size.
Note that the hybrid estimator corresponding to any small block estimator that can be obtained within their framework can also be obtained within their framework.
We present some details below to understand these connections more clearly.

Under the notation of \cite{mukerjee2018using}, they consider $T$ to be any partition of units into treatment groups.
In other words, it represents a single assignment of units into treatment and control.
As all of the partitions $T$ are equally probably under the block randomized assignment mechanism and it is clear that we are focusing on block randomization here, we will largely drop this notation.
We next define quantities used in \cite{mukerjee2018using} but not in our paper, but largely keep with the notation from our paper.
Equation (5) of \cite{mukerjee2018using} defines the treatment effect estimator as 
\[\widehat{\tau} = \sum_{z \in \{c,t\}}g(z)\widehat{\bar{Y}}(z), \]
where $\widehat{\bar{Y}}(z)$ is some sort of average of units under treatment $z$ and in our case $g(t) = 1$ and $g(c) = -1$.
Equation (6) defines 
\[\widehat{\bar{Y}}(z) = a(T,z) + \sum_{i: Z_i=z}b_i(T,z)Y_i(z).\]
In our case $a(T,z) = 0$ and $b_i(T,z) = \frac{\nk}{n}\frac{1}{\nzk}$ for units in block $k$ under treatment $z$ such that
\[\widehat{\bar{Y}}(z)=\sum_{k=1}^K\sum_{i: Z_i=z, b_i=k} \frac{\nk}{n}\frac{1}{\nzk}Y_i(z) = \sum_{k=1}^K \frac{\nk}{n}\meanpozkobs.\]
Then we have
\[\widehat{\tau}=\tauestbk=\sum_{k=1}^K \frac{\nk}{n}\left(\meanpotkobs-\meanpockobs\right),\]
as desired.

To layout the framework for variance estimation of \cite{mukerjee2018using}, we need to define some more quantities.
First, we have the probabilities
\begin{align*}
\pi_i(z) &= E[\mathbb{I}_{Z_i=z}] = \frac{\nzk}{\nk} \quad \text{for unit $i: b_i=k$,}\\
\pi_{ii^*}(z, z^*) &= E[\mathbb{I}_{Z_i=z}\mathbb{I}_{Z_{i^*}=z^*}] = \begin{cases}
\frac{\nzk}{\nk}\frac{n_{z^*,j}}{n_j} & \text{for unit $i : b_i=k$, $i^*: b_{i^*}=j$ with $k \neq j$,}\\
\frac{\nzk(\nzk-1)}{\nk(\nk-1)} & \text{for unit $i, i^*: b_i=b_{i^*} = k$, $z = z^*$,}\\
\frac{\ntk\nck}{\nk(\nk-1)} & \text{for unit $i, i^* : b_i=b_{i^*} = k$, $z \neq z^*$.}\\
\end{cases}
\end{align*}

Then we have the quantities from equations (12) of \cite{mukerjee2018using}, which we simplify in our setting as follows:
\begin{align*}
M &= \sum_{z\in \{c,t\}} \sum_{z^* \in \{c,t\}} g(z) g(z^*) A(z,z^*)\\
& = \sum_{z\in \{c,t\}} \sum_{z^* \in \{c,t\}} g(z) g(z^*) a(T,z)a(T,z^*)\\
& = 0,\\
M_i(z) &= g(z) \sum_{z^* \in \{c,t\}}  g(z^*) \left(A_i^{(1)}(z,z^*) + A_i^{(2)}(z^*,z)\right)\\
& = g(z) \sum_{z^* \in \{c,t\}}  g(z^*) \left(E[\mathbb{I}_{Z_i=z}]a(T,z^*)b_i(T,z) + E[\mathbb{I}_{Z_i=z^*}]a(T,z)b_i(T,z^*)\right)\\
& = 0,\\
M_{ii}(z) &= (g(z))^2B_{ii}(z, z)\\
& = (g(z))^2E[\mathbb{I}_{Z_i=z}(b_i(T,z))^2]\\
& = \frac{\nzk}{\nk}\left(\frac{\nk}{n}\frac{1}{\nzk}\right)^2 \quad \text{( for $i: b_i = k$)}\\
& = \frac{1}{n^2}\frac{\nk}{\nzk},\\
M_{ii^*}(z, z^*) &= g(z)g(z^*)B_{ii^*}(z, z^*)\\
&= g(z)g(z^*)E[\mathbb{I}_{Z_i=z}\mathbb{I}_{Z_{i^*}=z^*}b_i(T,z)b_{i^*}(T,z^*)]\\
& = \begin{cases}
g(z)g(z^*)\frac{1}{n^2} & \text{for unit $i: b_i=k$, $i^*: b_{i^*} = j$ with $k \neq j$,}\\
\frac{1}{n^2}\frac{\nk(\nzk-1)}{\nzk(\nk-1)} & \text{for unit $i, i^*: b_i = b_{i^*} = k$, $z = z^*$,}\\
-\frac{1}{n^2}\frac{\nk}{\nk-1} & \text{for unit $i, i^* : b_i = b_{i^*} = k$, $z \neq z^*$.}
\end{cases}
\end{align*}

To get the variance and variance estimators we need an additional quantity from Equation (17) of \cite{mukerjee2018using},
\begin{align*}
\tilde{M}_{ii^*}(z, z^*) &= g(z)g(z^*)\left[B_{ii^*}(z, z^*) + q_{ii^*} - \frac{1}{n^2}\right]\\
&=\begin{cases}
g(z)g(z^*)q_{ii^*} & \text{for unit $i: b_i=k$, $i^*: b_{i^*} = j$ with $k \neq j$,}\\
q_{ii^*} -\frac{1}{n^2}\frac{\nk-\nzk}{\nzk(\nk-1)}   & \text{for unit $i, i^*: b_i=b_{i^*} = k$, $z = z^*$,}\\
- \left[\frac{1}{n^2}\frac{1}{\nk-1} + q_{ii^*}\right]& \text{for unit $i, i^*:b_i = b_{i^*} = k$, $z \neq z^*$,}\\
\end{cases}
\end{align*}
with $q_{ii^*}$ being entries from a matrix, as explained below.

We can use these quantities to derive the true variance of the blocking estimator.
This was shown in \cite{mukerjee2018using} as an example of the use of their framework.
Further, those authors show that the variance can be written using some positive-semi definite $\bm{Q}$ matrix with $(i,i^*)$ entry denoted $q_{ii^*}$ such that $\bm{Q}\bm{J} = \bm{0}$, where $\bm{J}$ is a matrix of all ones, and $q_{ii} = 1/N^2$ (see equation (15), (18), (19) of \cite{mukerjee2018using}), as 
\[\text{var}(\widehat{\tau}) = V_{\bm{Q}}(\widehat{\tau}) - \bm{\tau}'\bm{Q}\bm{\tau},\]
where $\bm{\tau}' = (\tau_1,\dots,\tau_n)$ and
\begin{align*}
V_{\bm{Q}}(\widehat{\tau}) =& M + \sum_{z \in \{c,t\}}\sum_{i=1}^n\left(M_i(z)Y_i(z) + M_{ii}(z)(Y_i(z))^2\right)\\
&+\sum_{z \in \{c,t\}}\sum_{z^* \in \{c,t\}}\sum_{i=1}^n\sum_{i^* (\neq i) =1}^n\tilde{M}_{ii^*}(z,z^*)Y_i(z)Y_{i^*}(z^*).
\end{align*}
In our setting,
\begin{align*}
V_{\bm{Q}}(\tauestbk)=& \sum_{z \in \{c,t\}}\sum_{k=1}^K\sum_{i: b_i = k}\frac{1}{n^2}\frac{\nk}{\nzk}(Y_i(z))^2\\
&+ \sum_{z \in \{c,t\}}\sum_{k=1}^K\sum_{i: b_i=k}\sum_{i^*(\neq i): b_{i^*}=k} \left(q_{ii^*} -\frac{1}{n^2}\frac{\nk-\nzk}{\nzk(\nk-1)}\right)Y_i(z)Y_{i^*}(z)\\
& - 2\sum_{k=1}^K\sum_{i: b_i=k}\sum_{i^*(\neq i): b_{i^*}=k}\left(\frac{1}{n^2}\frac{1}{\nk-1} + q_{ii^*}\right)Y_i(t)Y_{i^*}(c)\\
&+\sum_{z \in \{c,t\}}\sum_{z^* \in \{c,t\}}\sum_{k=1}^K\sum_{j (\neq k)=1}^K\sum_{i:b_i=k}\sum_{i^*:b_{i^*}=j}g(z)g(z^*)q_{ii^*}Y_i(z)Y_{i^*}(z^*).
\end{align*}

Armed with this set up, the authors show that we can obtain a conservative variance estimator by estimating $V_{\bm{Q}}(\widehat{\tau})$ with (equation (21) of \cite{mukerjee2018using}),
\begin{align*}
\widehat{V}_{\bm{Q}}(\widehat{\tau}) =& M + \sum_{z\in \{c,t\}}\sum_{i:Z_i=z}\frac{1}{\pi_i(z)}\left(M_i(z)Y_i(z) + M_{ii}(z)(Y_i(z))^2\right)\\
& + \sum_{z\in \{c,t\}}\sum_{z^*\in \{c,t\}}\sum_{i: Z_i=z}\sum_{i^*(\neq i): Z_{i^*} = z^*}\frac{\tilde{M}_{ii^*}(z,z^*)}{\pi_{ii^*}(z,z^*)}Y_i(z)Y_{i^*}(z^*).
\end{align*}

In our setting, this simplifies to
\begin{align*}
\widehat{V}_{\bm{Q}}(\tauestbk)
=& \sum_{z\in \{c,t\}}\sum_{i: Z_i=z}\frac{M_{ii}(z)}{\pi_i(z)}(Y_i(z))^2 + \sum_{z\in \{c,t\}}\sum_{k=1}^K\sum_{i: b_i=k, Z_i=z}\sum_{i^*(\neq i): b_{i^*} =k, Z_{i^*}=z}\frac{\tilde{M}_{ii^*}(z,z)}{\pi_{ii^*}(z,z)}Y_i(z)Y_{i^*}(z)\\
& + 2\sum_{k=1}^K\sum_{i: b_i=k, Z_i=t}\sum_{i^*(\neq i): b_{i^*}=k, Z_{i^*}=c}\frac{\tilde{M}_{ii^*}(t,c)}{\pi_{ii^*}(t,c)}Y_i(t)Y_{i^*}(c)\\
& + \sum_{z\in \{c,t\}}\sum_{z^*\in \{c,t\}}\sum_{k=1}^K\sum_{j (\neq k)=1}^K\sum_{i: b_i=k, Z_i=z}\sum_{i^*: b_{i^*}=j, Z_{i^*}=z^*}\frac{\tilde{M}_{ii^*}(z,z^*)}{\pi_{ii^*}(z,z^*)}Y_i(z)Y_{i^*}(z^*)\\
=& \underbrace{\sum_{z\in \{c,t\}}\sum_{k=1}^K\sum_{i: b_i=k, Z_i=z}\frac{\nk^2}{n^2}\frac{1}{\nzk^2}(Y_i(z))^2}_{\textbf{A}}\\
& +\underbrace{ \sum_{z\in \{c,t\}}\sum_{k=1}^K\sum_{i: b_i=k, Z_i=z}\sum_{i^*(\neq i): b_{i^*}=k, Z_{i^*}=z}\frac{q_{ii^*} -\frac{1}{n^2}\frac{\nk-\nzk}{\nzk(\nk-1)}}{\frac{\nzk(\nzk-1)}{\nk(\nk-1)}}Y_i(z)Y_{i^*}(z)}_{\textbf{B}}\\
& - \underbrace{2\sum_{k=1}^K\sum_{i: b_i=k, Z_i=t}\sum_{i^*: b_{i^*}=k, Z_{i^*}=c}\frac{\frac{1}{n^2}\frac{1}{\nk-1} + q_{ii^*}}{\frac{\ntk\nck}{\nk(\nk-1)}}Y_i(t)Y_{i^*}(c)}_{\textbf{C}}\\
& + \underbrace{\sum_{z\in \{c,t\}}\sum_{z^*\in \{c,t\}}\sum_{k=1}^K\sum_{j(\neq k)=1}^K\sum_{i: b_i=k, Z_i=z}\sum_{i^*: b_{i^*}=j, Z_{i^*}=z^*}\frac{g(z)g(z^*)q_{ii^*}}{\frac{\nzk}{\nk}\frac{n_{z^*,j}}{n_j}}Y_i(z)Y_{i^*}(z^*)}_{\textbf{D}}.
\end{align*}
To connect this type of estimator to the estimators in our paper, we need to find specifications of the $\bm{Q}$ matrix such that the estimators are equivalent.
In particular, according to the assumption in Equation (22) of \cite{mukerjee2018using}, this estimator requires $\pi_{ii^*}(z, z^*) >0$ whenever $\tilde{M}_{ii^*}(z,z^*) \neq 0$.
To meet this requirement, we need $\tilde{M}_{ii^*}(z,z^*) = 0$ whenever $\pi_{ii^*}(z, z^*) =0$, which occurs in term \textbf{B} in the above expression if we have one treated (or one control) unit in block $k$.
This means that for any block $k$ with only one unit assigned to treatment $z$ we need
\[q_{ii^*} -\frac{1}{n^2}\frac{\nk-\nzk}{\nzk(\nk-1)} = 0  \implies q_{ii^*} =\frac{1}{n^2}\frac{\nk-\nzk}{\nzk(\nk-1)} = \frac{1}{n^2}\frac{\nk-1}{\nk-1} = \frac{1}{n^2}\]
for $i, i^*: b_i = b_{i^*} = k$.

To connect estimators of the form $\widehat{V}_{\bm{Q}}(\tauestbk)$ to our variance estimators for hybrid blocked experiments, we will put our small block estimators in the same form as  $\widehat{V}_{\bm{Q}}(\tauestbk)$ to make the comparison easier.
Our pooled small block estimator ($\varestsbp$), as well as the matched pairs estimator ($\varestsbs$), is of the following form, where $a_k$ are some weights such that $a_k-2\frac{\nk}{n}a_k + \frac{\nk^2}{n^2}\sum_{j=1}^Ka_j = \nk^2/n^2$ (see Supplementary Material~\ref{append:var_est_small_p_derv} for proof that the weights for $\varestsbs$ and $\varestsbp$ have this property):
\begin{align*}
&\sum_{k=1}^Ka_k\left(\taukest - \tauestbk\right)^2\\
 &= \sum_{k=1}^Ka_k\left(\taukest^2 - 2\tauestbk\taukest+\tauestbk^2\right)\\
&= \sum_{k=1}^Ka_k\left(\taukest^2 - 2\frac{\nk}{n}\taukest^2 - 2\taukest\sum_{j\neq k}\frac{n_j}{n}\widehat{\tau}_{j}+\sum_{h=1}^K\frac{n_h^2}{n^2}\widehat{\tau}_{h}^2+\sum_{h=1}^K\sum_{f\neq h}\frac{n_hn_f}{n^2}\widehat{\tau}_{h}\widehat{\tau}_{f}\right)\\
&= \sum_{k=1}^K\left(a_k-2\frac{\nk}{n}a_k + \frac{\nk^2}{n^2}\sum_{h=1}^Ka_h\right)\taukest^2 +\sum_{k=1}^K\sum_{j \neq k}\left(\frac{\nk n_j}{n^2}\sum_{h=1}^Ka_h - 2a_k\frac{n_j}{n}\right)\taukest\widehat{\tau}_{j}\\
&= \sum_{k=1}^K\left(\frac{\nk^2}{n^2}\right)\Bigg(\sum_{z \in \{c,t\}}\sum_{i: b_i=k, Z_i=z}\frac{1}{\nzk^2}\left(Y_i(z)\right)^2  +\sum_{z \in \{c,t\}}\sum_{i: b_i = k, Z_i=z}\sum_{i^* (\neq i): b_{i^*} = k, Z_{i^*}=z}\frac{1}{\nzk^2}Y_i(z)Y_{i^*}(z)\\
& \qquad \qquad \qquad \qquad  -2\sum_{i:b_i=k, Z_i=t}\sum_{i^*: b_{i^*}=k, Z_{i^*}=c}\frac{1}{\ntk\nck}Y_i(t)Y_{i^*}(c)\Bigg)\\
& \quad+\sum_{k=1}^K\sum_{j \neq k}\left(\frac{\nk n_j}{n^2}\sum_{h=1}^Ka_h - 2a_k\frac{n_j}{n}\right)\sum_{z\in \{c,t\}}\sum_{z^*\in \{c,t\}}\sum_{i: b_i=k, Z_i=z}\sum_{i^*: b_{i^*}=j, Z_{i^*}=z^*}\frac{g(z)g(z^*)}{\nzk n_{z^*,j}}Y_i(z)Y_{i^*}(z^*)\\
&= \underbrace{\sum_{k=1}^K\frac{\nk^2}{n^2}\sum_{z \in \{c,t\}}\sum_{i:b_i=k, Z_i=z}\frac{1}{\nzk^2}\left(Y_i(z)\right)^2}_{\textbf{A}_1}+\underbrace{\sum_{k=1}^K\frac{\nk^2}{n^2}\sum_{z \in \{c,t\}}\sum_{i: b_i=k, Z_i=z}\sum_{i^* (\neq i):b_{i^*}=k, Z_{i^*}=z}\frac{1}{\nzk^2}Y_i(z)Y_{i^*}(z)}_{\textbf{B}_1}\\
&\quad -\underbrace{2\sum_{k=1}^K\frac{\nk^2}{n^2}\sum_{i: b_i=k, Z_i=t}\sum_{i^*: b_{i^*}=k, Z_{i^*}=c}\frac{1}{\ntk\nck}Y_i(t)Y_{i^*}(c)}_{\textbf{C}_1}\\
& \quad+\underbrace{\sum_{k=1}^K\sum_{j \neq k}\left(\frac{\nk n_j}{n^2}\sum_{h=1}^Ka_h - 2a_k\frac{n_j}{n}\right)\sum_{z\in \{c,t\}}\sum_{z^*\in \{c,t\}}\sum_{i: b_i=k, Z_i=z}\sum_{i^*: b_{i^*}=j, Z_{i^*}=z^*}\frac{g(z)g(z^*)}{\nzk n_{z^*,j}}Y_i(z)Y_{i^*}(z^*)}_{\textbf{D}_1}.\\
\end{align*}

We see that $\textbf{A}_1$ already matches $\textbf{A}$ from $\widehat{V}_{\bm{Q}}(\tauestbk)$.
We can simplify \textbf{B} and \textbf{C} using $q_{ii^*} = 1/n^2$ for $i, i^*$ in block $k$ such that there is only one unit assigned to $z$.

Starting with term $\textbf{B}$, note that if we assume each block is small (has one treated or one control) unit then for one $z$ term \textbf{B} is 0 and for the other we have, for units in block $k$, constant
\begin{align*}
\frac{q_{ii^*} -\frac{1}{n^2}\frac{\nk-\nzk}{\nzk(\nk-1)}}{\frac{\nzk(\nzk-1)}{\nk(\nk-1)}}&=\frac{\frac{1}{n^2} -\frac{1}{n^2}\frac{\nk-\nzk}{\nzk(\nk-1)}}{\frac{\nzk(\nzk-1)}{\nk(\nk-1)}}\\
&=\frac{1}{n^2}\frac{\nzk(\nk-1) -\nk+\nzk}{\frac{\nzk^2(\nzk-1)}{\nk}}\\
&=\frac{1}{n^2}\frac{\nk(\nzk-1)}{\frac{\nzk^2(\nzk-1)}{\nk}}\\
&=\frac{\nk^2}{n^2}\frac{1}{\nzk^2}.
\end{align*}
So we have that $\textbf{B}_1$ matches \textbf{B}.

Now for \textbf{C},
\begin{align*}
\frac{\frac{1}{n^2}\frac{1}{\nk-1} + q_{ii^*}}{\frac{\ntk\nck}{\nk(\nk-1)}} &= \frac{\frac{1}{n^2}\frac{1}{\nk-1} + \frac{1}{n^2}}{\frac{\ntk\nck}{\nk(\nk-1)}}\\
 &=\frac{1}{n^2} \frac{\frac{\nk}{\nk-1}}{\frac{\ntk\nck}{\nk(\nk-1)}}\\
  &=\frac{\nk^2}{n^2} \frac{1}{\ntk\nck}.
\end{align*}
Hence we have $\textbf{C}_1$ matches \textbf{C}.

Now we have the final term, in which we need to specify $q_{ii^*}$ for $i$ and $i^*$ in different blocks.
Let's start by simplifying the constant for $\textbf{D}_1$.
Now we will have to plug in the $a_k$ for $\varestsbp$.
\begin{align*}
\frac{\nk n_j}{n^2}\sum_{h=1}^Ka_h - 2a_k\frac{n_j}{n} =&\frac{\nk n_j}{n^2}\left(\sum_{h=1}^Ka_h - 2\frac{a_k}{\nk}n\right)\\
=&\frac{\nk n_j}{n^2}\left(\frac{\sum_{h=1}^K\frac{n_h^2}{n-2n_h}}{n+\sum_{h=1}^K\frac{n_h^2}{n-2n_h}} - \frac{2\nk n}{(n-2\nk)\left(n+\sum_{h=1}^K\frac{n_h^2}{n-2n_h}\right)}\right)\\
=&\frac{\nk n_j}{n^2}\left(\frac{(n-2\nk)\sum_{h=1}^K\frac{n_h^2}{n-2n_h}-2\nk n}{(n-2\nk)\left(n+\sum_{h=1}^K\frac{n_h^2}{n-2n_h}\right)}\right)\\
=&\frac{\nk n_j}{n^2}\left(\frac{n\sum_{h=1}^K\frac{n_h^2}{n-2n_h}-2\nk\left(n+\sum_{h=1}^K\frac{n_h^2}{n-2n_h}\right)}{(n-2\nk)\left(n+\sum_{h=1}^K\frac{n_h^2}{n-2n_h}\right)}\right)\\
=&\frac{\nk n_j}{n^2}\left(\frac{n\left(n+\sum_{h=1}^K\frac{n_h^2}{n-2n_h}\right)-n^2-2\nk\left(n+\sum_{h=1}^K\frac{n_h^2}{n-2n_h}\right)}{(n-2\nk)\left(n+\sum_{h=1}^K\frac{n_h^2}{n-2n_h}\right)}\right)\\
=&\frac{\nk n_j}{n^2}\left(1-\frac{n^2}{(n-2\nk)\left(n+\sum_{h=1}^K\frac{n_h^2}{n-2n_h}\right)}\right)
\end{align*}
So, comparing to term \textbf{D}, we need
\[q_{ii^*} = \frac{1}{n^2}\left(1-\frac{n^2}{(n-2\nk)\left(n+\sum_{h=1}^K\frac{n_h^2}{n-2n_h}\right)}\right).\]

First note that if all blocks are of the same size, R, in which case $\varestsbp$ reduces down to $\varestsbs$, then
\begin{align*}
\frac{1}{n^2}\left(1-\frac{n^2}{(n-2\nk)\left(n+\sum_{h=1}^K\frac{n_h^2}{n-2n_h}\right)}\right) =&\frac{1}{R^2K^2}\left(1-\frac{R^2K^2}{(RK-2R)\left(RK+\sum_{h=1}^K\frac{R^2}{RK-2R}\right)}\right)\\
    =&\frac{1}{R^2K^2}\left(1-\frac{K}{K-1}\right)\\ 
    =&-\frac{1}{R^2K^2(K-1)}.
\end{align*}
This is actually essentially the same solution as example (b) in Section 5.1 of \cite{mukerjee2018using} for the split plot example with plots of the same size.
We can do a check that this works according to their criteria.
We need $\bm{Q}\bm{J} = \bm{0}$.
Let's take the $i$th row and arbitrary $j$th entry of $\bm{Q}\bm{J}$.
For all units belonging to the same block as unit $i$, $q_{ii^*} = 1/n^2$. Otherwise, $q_{ii^*} = 1/(n^2(K-1))$.
\begin{align*}
[\bm{Q}\bm{J}]_{ij} =& \frac{R}{n^2} - \frac{R(K-1)}{R^2K^2(K-1)}\\
=& \frac{R}{R^2K^2} - \frac{1}{RK^2}\\
=& 0
\end{align*}
So yes: for multiple blocks of the same size, the standard matched pairs variance estimator, $\varestsbs$, falls within the \cite{mukerjee2018using} framework for variance estimators.

This means that we can also do a block diagonalized version of this $\bm{Q}$ matrix to get our non-pooled small block estimator, $\varestsbm$, where we split blocks up by size first and then block diagonalize according to these splits.
More generally, estimators will be block diagonalized according to whether the pieces of the variance estimators include terms from multiple blocks.
For instance, with the big block part of the estimator, we would have all of the big blocks being separate block diagonals because we estimate variance within each block separately.
Similarly, for $\varestsbm$ because we estimate the variance within each small block group of the same size separately, we would block diagonalize according to block size for that piece.
On the other hand, if we had all blocks of the same size then there is no block diagonal because $\varestsbs$ uses all of the blocks together to estimate the overall variance.

Now what about $\varestsbp$ for small blocks of varying size?
The $q_{ii^*}$ terms for $i: b_i=k$ and $i^*: b_{i^*} = j$ with $k \neq j$ would be,
\[q_{ii^*} = \frac{1}{n^2}\left(1-\frac{n^2}{(n-2\nk)\left(n+\sum_{h=1}^K\frac{n_h^2}{n-2n_h}\right)}\right),\]
which are not guaranteed to be symmetric in this case (because of the $\nk$ term in the denominator).
So the $\bm{Q}$ matrix would not fall into \cite{mukerjee2018using} framework because it is not positive semi-definite.
Note that we assure positivity of bias for our estimator by requiring that $\nk < n/2$ for all blocks when using the pooled estimator.
Therefore, our estimator does not seem to fall within their framework and it appears to us that extending beyond their framework is necessary to solve the problem of small blocks of variable size.

\section{Data generating process for simulations}\label{append:more_sim}

The data generating process used for the simulations comparing the variance estimators gives us a single finite data set.
We then repeatedly randomize units to treatment, according to blocked randomization, to assess finite sample behavior.

The initial potential outcomes for the units in each block were drawn from a bivariate normal distribution, with the means and covariance matrix as follows (shown for a unit in block $k$):
\[\begin{pmatrix}
Y_i(0)\\
Y_i(1)
\end{pmatrix}\sim MVN\left(
\begin{pmatrix}
\alpha_k\\
\alpha_k + \beta_k
\end{pmatrix},\begin{pmatrix}
1 & \rho\\
\rho &1
\end{pmatrix}\right).
\]

The correlation of potential outcomes, $\rho$, was varied among 0, 0.5, and 1.
We controlled how differentiated the blocks were, and how heterogeneous the treatment effects across blocks were, by varying $\alpha_k$ and $\beta_k$.
We set $\alpha_k$ as $\alpha_k = \Phi^{-1}\left(1-\frac{k}{K+1}\right)a$. 
Similarly, $\beta_k = 5 + \Phi^{-1}\left(1-\frac{k}{k+1}\right)b$. 
The larger the $a$, the more the mean control potential outcomes for the blocks were spread apart.
The larger the $b$, the more heterogeneous the treatment impacts.
The parameters $a$ and $b$ were each varied among the values (0,0.1,0.3,0.5,0.8,1,1.5,2).
We keep the number and sizes of blocks fixed.
The blocks were ordered by size with the smallest block as block one.
As a consequence, the smaller blocks have both larger means under control and larger average treatment effects.

Simulations were run over assignment of units to treatments under a blocked design, which was done 5000 times for each combination of factors.
Before evaluating our overall results, we first checked that the simulation values agreed with the biases calculated for our variance estimators via the bias formulas in Section~\ref{subsec:var_est_perf_fin} of our paper.
They did.
Replication code showing this is available.

\section{Simulation investigating the variance of the variance estimators}\label{append:var_est_var}

In this auxiliary simulation we examine how the blocking variance estimators compare amongst themselves in terms of their own variance.
To assess this, we examine the variance of our variance estimators from our simulation study in Section~\ref{sec:sim} with the data generating process given in Section~\ref{append:more_sim}.

Results are on Figure~\ref{plot:var_var}.
We see that, in terms of variance, the estimators are generally comparable.
We expect more instability from estimators that utilize only information from the estimated average treatment effects, not from the variation of the individual units.
We see that the weighted regression estimator has the lowest variability.

\begin{figure}
\centering
\includegraphics[width=0.8\textwidth]{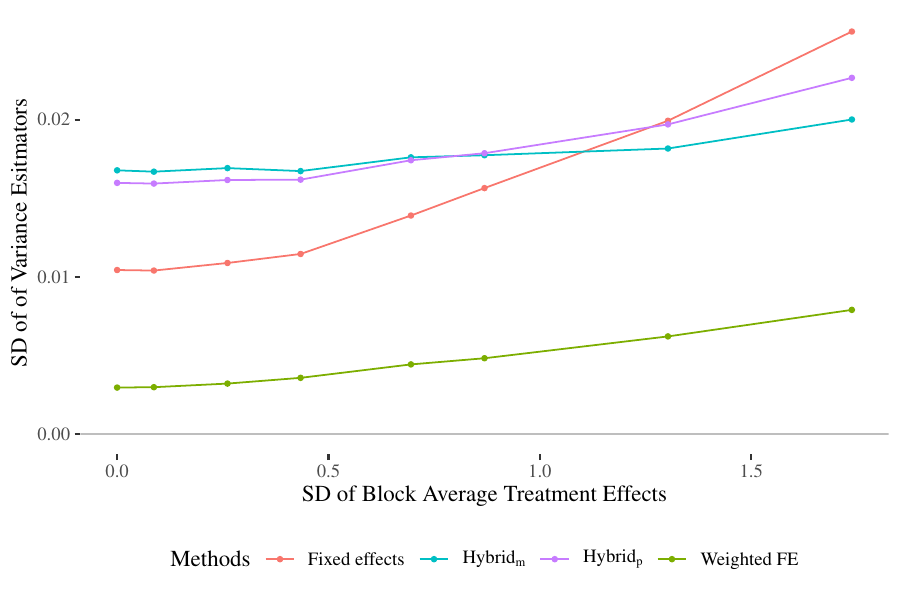}
\caption{Simulations to assess variance estimators' variance.
The x-axis shows the standard deviation of block average treatment effects.
Points show the average across the values of $\rho$ and the standard deviation of block average control potential outcomes in the simulation. (The trends for different $\rho$ were essentially the same.) FE stands for fixed effects.}
\label{plot:var_var}
\end{figure}


\section{Creation and bias of $\varestsbm$}\label{append:var_est_small_m}

We first construct the variance estimator $\varestsbm$, and then derive the bias for this estimator for different frameworks.

\subsection{Creation of $\varestsbm$, Equation~(\ref{eq:var_est_sb_m})}\label{append:var_est_sb_m_derv}
To formally state the method described by Equation~(\ref{eq:var_est_sb_m}) from Section~\ref{subsec:var_est_small}, we first express our estimands and estimators in terms of weighted averages of estimates within collections of same-size blocks.
In the following, let there be $J$ unique block sizes in the sample. 
Let $\numsizej$ be the $j$th block size and let $\numsb$ be the number of blocks in the population of size $\numsizej$.
So then $n = \sum_{j=1}^J\numsizej\numsb$.
In particular, the sample average treatment effect for all units in blocks of size $\numsizej$ is
\[\tausbfsj = \frac{1}{\numsb}\sum_{k: \nk = \numsizej} \taukfs.\]
Let $N_j = \numsizej \numsb$ be the total number of units in the small blocks of size $\numsizej$.
Then the overall sample average treatment effect in terms of these $\tausbfsj$ is
\begin{equation}
\tausbfs = \frac{1}{\sum_{i=1}^J N_j}\sum_{j=1}^J N_j \tausbfsj. \label{eq:tau_small_j}	
\end{equation}
$\tausbfs$ is the same as $\taufs$ as before; we add the subscript ``small'' here to clarify the notation when we discuss hybrid experiments.
Note that these definitions are analogous for the infinite population, which are indicated by removing the $\mathcal{S}$ subscript.

The treatment effect estimators can be written in analogous form to the above.
We have unbiased estimators $\tausbjest = \frac{1}{\numsb}\sum_{k: \nk = \numsizej} \taukest$ for the average treatment effects in blocks of size $\numsizej$.
Simply plug them into Equation~\ref{eq:tau_small_j} to obtain an overall treatment effect estimator.

As discussed in Section~\ref{subsec:var_est_small}, within each piece $j$, use a variance estimator with the same form as Equation~\ref{eq:var_est_sb_same_size}:
\begin{align*}
\varestsbj  = \frac{1}{\numsb(\numsb-1)}\sum_{k: \nk = \numsizej} (\taukest - \tausbjest)^{2}
\end{align*}

Then combine to create an overall variance estimator:
\begin{align*}
\varestsbm = \frac{1}{\left(\sum_{j=1}^J N_j \right)^2}\sum_{j=1}^J N_j^2\varestsbj. 
\end{align*}

\subsection{Proof of Corollary~\ref{lemma:var_est_sb_m_fin_samp} (Bias under finite sample) and Corollary~\ref{lemma:bias_var_est_sb_m_strat_samp} (Bias under stratified sampling)}\label{append:bias_var_est_sb_m_strat_samp}

\begin{proof}
For this section assume that we are in the finite sample framework. 
The results for the stratified sampling from an infinite population framework follow directly by changing the expectations and notation.

First we will focus on $\varestsbj$ which is the variance estimator for $\tausbjest$. 
Note that
\begin{align*}
\text{var}\left(\tausbjest|\mathcal{S}\right) &= \text{var}\left(\frac{1}{\numsb}\sum_{k: \nk = \numsizej} \taukest\Big|\mathcal{S}\right)= \frac{1}{\numsb^2}\sum_{k: \nk = \numsizej}\text{var}\left( \taukest|\mathcal{S}\right).
\end{align*}

\begin{align*}
\EE&\left[\varestsbj|\mathcal{S}\right]\\
 &=\EE\left[ \frac{1}{\numsb(\numsb-1)}\sum_{k: \nk = \numsizej} (\taukest - \tausbjest)^{2}\Big|\mathcal{S}\right]\\
&=\frac{1}{\numsb(\numsb-1)}\EE\left[\sum_{k: \nk = \numsizej} \left(\taukest^2 -2\taukest\tausbjest+ \tausbjest^2\right)\Big|\mathcal{S}\right]\\
&=\frac{1}{\numsb(\numsb-1)}\left(\sum_{k: \nk = \numsizej}\left[\text{var}\left(\taukest|\mathcal{S}\right) + \taukfs^2\right] - \numsb\left[\text{var}\left(\tausbjest|\mathcal{S}\right) + \tausbfsj^2\right]\right)\\
&=\frac{1}{\numsb(\numsb-1)}\left(\sum_{k: \nk = \numsizej}\left[\frac{\numsb - 1}{\numsb}\text{var}\left(\taukest|\mathcal{S}\right) + \taukfs^2\right] - \numsb \tausbfsj\right)\\
&=\frac{1}{\numsb^2}\sum_{k: \nk = \numsizej}\text{var}\left(\taukest|\mathcal{S}\right) + \frac{1}{\numsb(\numsb-1)}\sum_{k: \nk = \numsizej}\left(\taukfs - \tausbfsj\right)^2\\
&=\text{var}\left(\tausbjest|\mathcal{S}\right) + \frac{1}{\numsb(\numsb-1)}\sum_{k: \nk = \numsizej}\left(\taukfs - \tausbfsj\right)^2.
\end{align*}

So the bias is
\[\EE\left[\varestsbj|\mathcal{S}\right] - \text{var}\left(\tausbjest|\mathcal{S}\right)=\frac{1}{\numsb(\numsb-1)}\sum_{k: \nk = \numsizej}\left(\taukfs - \tausbfsj\right)^2.\]

Now we move our attention to $\varestsbm$ which is a variance estimator for $\tausbest$.

We have
\begin{align*}
\text{var}\left(\tausbest|\mathcal{S}\right) &= \text{var}\left(\frac{1}{\sum_{i=1}^J m_iK_i}\sum_{j=1}^J\numsizej\numsb\tausbjest|\mathcal{S}\right)\\
&= \frac{1}{\left(\sum_{i=1}^J m_iK_i\right)^2}\sum_{j=1}^J\left(\numsizej\numsb\right)^2\text{var}\left(\tausbjest|\mathcal{S}\right).
\end{align*}
So then
\begin{align*}
\EE&\left[\varestsbm|\mathcal{S}\right]\\
&=\frac{1}{\left(\sum_{i=1}^J m_iK_i\right)^2}\sum_{j=1}^J\left(\numsizej\numsb\right)^2\EE\left[\varestsbj|\mathcal{S}\right]\\
&=\text{var}\left(\tausbest|\mathcal{S}\right) + \sum_{j=1}^J\frac{\numsizej^2\numsb}{\left(\sum_{i=1}^J m_iK_i\right)^2(\numsb-1)}\sum_{k: \nk = \numsizej}\left(\taukfs - \tausbfsj\right)^2.
\end{align*}

So the bias is
\[\EE\left[\varestsbm|\mathcal{S}\right] - \text{var}\left(\tausbest|\mathcal{S}\right) = \sum_{j=1}^J\frac{\numsizej^2\numsb}{\left(\sum_{i=1}^J m_iK_i\right)^2(\numsb-1)}\sum_{k: \nk = \numsizej}\left(\taukfs - \tausbfsj\right)^2.\]
\end{proof}

\subsection{Proof of Corollary~\ref{lem:rand_samp_strat_varsbm} (Bias under random sampling of strata, conditioning on block size)}\label{append:bias_var_est_sb_m_inf_strat}

\begin{proof}
Assume that we are in the random sampling of strata framework of Section~\ref{subsec:inf_finstrat_framework}.
We are focusing in on just the set of strata of size $m_j$.
We can either consider that there only exist strata of this size or we can imagine a sampling mechanism that draws these strata independently from strata of other sizes (e.g. stratified sampling by strata size), which is the same as conditioning on the number of strata of each size in the sample.
Also,
\begin{align*}
\text{var}\left(\tausbjest|\model_2\right) &= \EE\left[\text{var}\left(\tausbjest|\mathcal{S}\right)|\model_2\right] + \text{var}\left(\EE\left[\tausbjest|\mathcal{S}\right]|\model_2\right).
\end{align*}

From Appendix \ref{append:bias_var_est_sb_m_strat_samp}, we can see that
\begin{align*}
\EE&\left[\varestsbj|\mathcal{S}\right] &=\text{var}\left(\tausbjest|\mathcal{S}\right) + \frac{1}{\numsb(\numsb-1)}\sum_{k: \nk = \numsizej}\left(\taukfs - \tausbfsj\right)^2.
\end{align*}

From standard results from sampling theory (see \cite[Chapter~2]{lohr2010sampling}) we have
\[\EE\left[\frac{1}{\numsb(\numsb-1)}\sum_{k: \nk = \numsizej}\left(\taukfs - \tausbfsj\right)^2|\model_2\right] = \frac{\sigma^2_\tau}{K_j} = \text{var}\left(\tausbjest|\model_2\right).\]

Hence, we end up with
\begin{align*}
\EE\left[\varestsbj|\model_2\right] &=\EE\left[\text{var}\left(\tausbjest|\mathcal{S}\right)|\model_2\right] + \text{var}\left(\EE\left[\tausbjest|\mathcal{S}\right]|\model_2\right).
\end{align*}

Thus, this is an unbiased variance estimator in this setting.

If we have varying size but condition on the number of strata of each size that we sample, then we use the fact that stratified sampling causes the estimators for each strata size to be independent.

The proof of this result for an infinite number of infinite size strata is direct by replacing the conditioning on $\mathcal{S}$ by conditioning on $\bm{B}$ and using results from the stratified sampling framework.
\end{proof}

\section{Creation and bias of $\varestsbp$}\label{append:var_est_small_p_derv}

We next construct the variance estimator $\varestsbp$ that can estimate variance across a heterogeneous assortment of blocks, and then derive its bias under different frameworks.

\subsection{Proof of Theorem~\ref{lemma:var_est_sb_p_fin_samp} (Bias under finite sample) and Corollary~\ref{lemma:bias_var_est_sb_p_strat_samp} (Bias under stratified sampling)}\label{append:var_est_small_p_proofs}
\begin{proof}
For this section assume that we are in the finite sample framework.
The results for the stratified sampling framework follow directly by changing what we are taking the expectation with respect to and exchanging notation.
We explain how $\varestsbp$  was derived which also provides the bias of the estimator in these two frameworks.

To begin we consider, for an experiment with all small blocks, a variance estimator of the form
\[ X \equiv \sum_{k=1}^K a_k \left(\taukest - \tauestbk\right)^2 \]
for some collection of $a_k$.
We then wish to find non-negative $a_k$'s that would make this estimator as close to unbiased as possible.
In particular, we aim to create an estimator with similar bias to $\varestsbm$ but that allows for blocks of varying size.
That is, we are looking to create a similarly conservative estimator that is unbiased when the average treatment effect is constant across blocks.
This also means that we are creating the minimally biased conservative estimator of this form, without further assumptions.

The expected value of an estimator of this form is
 \begin{align*}
 \EE&\left[ X |\mathcal{S}\right]\\
  =&   \EE\left[\sum_{k=1}^{K}a_{k}\left(\taukest - \taukfs + \taukfs - \taufs + \taufs - \tauestbk\right)^{2}|\mathcal{S}\right]\\
 =&   \EE\Big[\sum_{k=1}^{K}a_{k}\Big(\left(\taukest - \taukfs\right)^{2} + \left(\taukfs - \taufs\right)^{2} + \left(\taufs - \tauestbk\right)^{2}\\
 &+ 2\left(\taukest - \taukfs\right)\left(\taukfs - \taufs\right) + 2\left(\taukest - \taukfs\right)\left(\taufs - \tauestbk\right) + 2\left(\taukfs - \taufs\right)\left(\taufs - \tauestbk\right)\Big)|\mathcal{S}\Big]\\
 =&  \sum_{k=1}^{K}a_{k}\left(\text{var}\left(\taukest|\mathcal{S}\right) +  \EE\left[\left(\taukfs \!-\! \taufs\right)^{2}|\mathcal{S}\right] +  \underbrace{\EE\left[\left(\taufs\! -\! \tauestbk\right)^{2}|\mathcal{S}\right]}_{\textbf{A}} + 2 \underbrace{\EE\left[\left(\taukest \! - \! \taukfs\right)\left(\taufs \!-\! \tauestbk\right)|\mathcal{S}\right]}_{\textbf{B}}\right)
 \end{align*}

For \textbf{A}: 
  \begin{align*}
 \EE\left[(\taufs - \tauestbk)^{2}|\mathcal{S}\right] &=  \text{var}\left(\tauestbk|\mathcal{S}\right)=\sum_{k=1}^{K}\frac{\nk^{2}}{n^{2}}\text{var}\left(\taukest|\mathcal{S}\right)
 \end{align*}
 
 For \textbf{B}:
 \begin{align*}
 \EE\left[(\taukest - \taukfs)(\taufs - \tauestbk)|\mathcal{S}\right] &= \EE\left[\left(\taukest - \taukfs\right)\sum_{j=1}^{K}\frac{n_{j}}{n}\left(\tau_{j, \mathcal{S}} - \widehat{\tau}_{j}\right)\Big|\mathcal{S}\right]\\
 &= \EE\left[-\frac{\nk}{n}\left(\taukest - \taukfs\right)^{2} +\left(\taukest - \taukfs\right)\sum_{j \neq k}\frac{n_{j}}{n}\left(\tau_{j, \mathcal{S}} - \widehat{\tau}_{j, \mathcal{S}}\right)\Big|\mathcal{S}\right]\\
  &=-\frac{\nk}{n}\text{var}\left(\taukest|\mathcal{S}\right)
 \end{align*}

Due to the assignment mechanism, $\widehat{\tau}_{j}$ will be independent of $\widehat{\tau}_{k}$ so the cross terms are all zero in the above equation.
 
 Putting \textbf{A} and \textbf{B} together, we get
  \begin{align*}
\EE\left[X |\mathcal{S}\right] &= \sum_{k=1}^{K}a_{k}\text{var}\left(\taukest|\mathcal{S}\right) + \sum_{k=1}^{K}a_{k}\left(\taukfs - \taufs\right)^{2} + \sum_{k=1}^{K}a_{k}\sum_{j=1}^{K}\frac{n_{j}^{2}}{n^{2}}\text{var}\left(\widehat{\tau}_{j}|\mathcal{S}\right)\\
& \qquad - 2\sum_{k=1}^{K}a_{k}\frac{\nk}{n}\text{var}\left(\taukest|\mathcal{S}\right)\\
 &=  \sum_{k=1}^{K}\left(a_{k} - 2a_{k}\frac{\nk}{n} + \frac{n_{k}^{2}}{n^{2}}\sum_{j=1}^{K}a_{j}\right)\text{var}\left(\taukest|\mathcal{S}\right) + \sum_{k=1}^{K}a_{k}\left(\taukfs - \taufs\right)^{2}
 \end{align*}

We now select $a_k$ to make the above as close to the true variance as possible.
The second term will be small if the $\taukfs$ do not vary much.
But this is unknown and thus we cannot select universal $a_k$ to control it.
We would like
 \[a_{k} - 2a_{k}\frac{n_{k}}{n} + \frac{n_{k}^{2}}{n^{2}}\sum_{j=1}^{K}a_{j} = \frac{n_{k}^{2}}{n^{2}}\]
 so that the first term is the true variance.
Then the second term, the bias, would be similar to that of the standard matched pairs variance estimator.
 
If we solve the $a_{k}$ as above we will obtain a conservative estimator that is unbiased when we have equal average treatment effect for all blocks, for the stratified sampling or finite framework.
To show this, consider the bias:
 \begin{align}
 \EE\left[X|\mathcal{S}\right] &- \text{var}\left(\tauestbk|\mathcal{S}\right)\nonumber\\
  &=  \sum_{k=1}^{K}\left(a_{k} - 2a_{k}\frac{\nk}{n} + \frac{\nk^{2}}{n^{2}}\sum_{j=1}^{K}a_{j} - \frac{\nk^2}{n^2}\right)\text{var}\left(\taukest|\mathcal{S}\right) + \sum_{k=1}^{K}a_{k}\left(\taukfs - \taufs\right)^{2}.\label{eq:ex_var_sp}
 \end{align}

We know $\text{var}\left(\taukest|\mathcal{S}\right) \geq 0$ for all $k$.
We also have $\sum_{k=1}^{K}a_{k}\left(\taukfs - \taufs\right)^{2} \geq 0$ so at a minimum it is 0.
This implies that to always be conservative, the first term in the above expression must always be at least 0.
Hence, to minimize the bias but remain conservative, we set $a_{k} - 2a_{k}\frac{\nk}{n} + \frac{\nk^{2}}{n^{2}}\sum_{j=1}^{K}a_{j} - \frac{\nk^2}{n^2} = 0$.
Note that $\taukfs$ and $\taufs$ are unknown and so we cannot optimize with respect to them.

Denote $C = \sum_{k=1}^{K}a_{k}$.
 Then we want to solve
 
 \begin{align*}
 a_k - 2a_k\frac{\nk}{n} + \frac{\nk^2}{n^2}C &= \frac{\nk^2}{n^2}\\
 a_k(1 - 2\frac{\nk}{n}) &= \frac{\nk^2}{n^2}(1-C)\\
 a_k &= \frac{\nk^2}{n}\frac{1-C}{n - 2\nk}
 \end{align*}
 
 But then
 \begin{align*}
 C&= \sum_{k=1}^{K} a_k = \sum_{k=1}^{K} \frac{\nk^2(1-C)}{n(n-2\nk)}\\
 \left(1 + \frac{1}{n}\sum_{k=1}^{K} \frac{\nk^2}{n-2\nk}\right)C &= \frac{1}{n}\sum_{k=1}^{K} \frac{\nk^2}{n-2\nk}
 \end{align*}
 \begin{align*}
 C&=\frac{1}{n}\sum_{k=1}^{K}\frac{\nk^2}{n-2\nk}\left(\frac{1}{1+ \frac{1}{n}\sum_{j=1}^{K}\frac{n_j^2}{n-2n_j}}\right)
 = \frac{\sum_{k=1}^{K}\frac{\nk^2}{n-2\nk}}{n+\sum_{j=1}^{K}\frac{n_j^2}{n-2n_j}}
 \end{align*}
 
 Then we have
 \begin{align*}
 a_k &= \frac{\nk^2}{n}\left(\frac{1-C}{n - 2\nk}\right)
 = \frac{\nk^2}{(n-2\nk)(n+\sum_{j=1}^K\frac{n_j^2}{n-2n_j})}.
 \end{align*}
 
 So then 
 \[\sum_{k=1}^{K} \frac{\nk^2}{\left(n-2\nk\right)\left(n+\sum_{j=1}^K\frac{n_j^2}{n-2n_j}\right)}\left(\taukest - \tauestbk\right)^{2}\]
  has bias 
  \[\sum_{k=1}^{K} \frac{\nk^2}{\left(n-2\nk\right)\left(n+\sum_{j=1}^K\frac{n_j^2}{n-2n_j}\right)}\left(\taukfs - \taufs\right)^{2}.\]
Bigger strata get weighted more heavily.

As a check, in the case where the $n_k$ are all the same, so that $n = K\nk$, the above boils down to $a_k = \frac{1}{K(K-1)}$ and $C = \frac{1}{K-1}$, giving us the classic matched pairs variance estimator.
\end{proof}

\subsection{Proof of Theorem~\ref{theorem:var_sbp_rand_samp} (Unbiasedness of $\varestsbp$ given independence of block sizes and effects)}\label{append:varsbp_rand_samp}
\begin{proof}
We start from the Equation~\ref{eq:ex_var_sp}.
As in Section~\ref{append:var_est_small_p_derv}, let 
\[ X \equiv \sum_{k=1}^K a_k \left(\taukest - \tauestbk\right)^2. \]
Then
 \begin{align*}
 \EE\left[X|\mathcal{S}\right]
  &=  \sum_{k=1}^{K}\left(a_{k} - 2a_{k}\frac{\nk}{n} + \frac{\nk^{2}}{n^{2}}\sum_{j=1}^{K}a_{j}\right)\text{var}\left(\taukest|\mathcal{S}\right) + \sum_{k=1}^{K}a_{k}\left(\taukfs - \taufs\right)^{2}.
 \end{align*}
 
 Previously we were concerned with getting the first term correct.
 But in the Random Sampling of Strata setting, the second term is trickier.
 This is especially the case if we have large blocks and thus can estimate the first term.
 So first we focus on the second term.
 Keeping the variance decomposition in mind, we ultimately want the expectation of this second term to look like $\text{var}(\taufs)$.

 As a reminder, the variance decomposition is
 \begin{align*}
 \text{var}(\tauestbk|\model_2) &=\EE\left[\text{var}(\tauestbk|\mathcal{S})|\model_2\right]+\text{var}\left(\EE[\tauestbk|\mathcal{S}]|\model_2\right)\\
 &=\EE\left[\text{var}(\tauestbk|\mathcal{S})|\model_2\right]+\text{var}\left(\taufs|\model_2\right).\\
 \end{align*}

 We assume, for simplicity, that the block sizes are independent from the treatment effects.
 This implies that $\EE[\taukfs|\model_2] = \tau$.
 
 The expected value of the second term in this setting is
 \begin{align*}
 \EE&\left[ \sum_{k=1}^{K}a_{k}\left(\taukfs - \taufs\right)^{2}\Big|\model_2 \right]\\
    =& \EE\left[ \sum_{k=1}^{K}a_{k}\taukfs^2 -2\taufs\sum_{k=1}^{K}a_k\taukfs +\taufs^2\sum_{k=1}^{K}a_{k}\Big|\model_2\right]\\
    =& \EE\Bigg[ \sum_{k=1}^{K}a_{k}\taukfs^2 -2\left(\sum_{k=1}^{K}a_k\frac{\nk}{n}\taukfs^2 +\sum_{k=1}^K\sum_{j\neq k}a_k\frac{n_j}{n}\taukfs\tau_{j,\mathcal{S}}\right)\\
    &+\left(\sum_{k=1}^{K}\frac{\nk^2}{n^2}\taukfs^2 +\sum_{k=1}^K\sum_{j\neq k}\frac{\nk n_j}{n^2}\taukfs\tau_{j,\mathcal{S}}\right)\sum_{k=1}^{K}a_{k}\Big|\model_2\Bigg]\\
        =& \EE\left[ \sum_{k=1}^{K}\left(a_{k}-2a_k\frac{\nk}{n}+\frac{\nk^2}{n^2}\sum_{i=1}^{K}a_{i}\right)\taukfs^2 -\sum_{k=1}^K\sum_{j\neq k}\left(2a_k\frac{n_j}{n}-\frac{\nk n_j}{n^2}\sum_{i=1}^{K}a_{i}\right)\taukfs\tau_{j,\mathcal{S}}\Big|\model_2\right]\\
        =& \EE\left[ \sum_{k=1}^{K}\left(a_{k}-2a_k\frac{\nk}{n}+\frac{\nk^2}{n^2}\sum_{i=1}^{K}a_{i}\right)\taukfs^2\Big|\model_2\right] -\EE\left[ \sum_{k=1}^K\sum_{j\neq k}\left(2a_k\frac{n_j}{n}-\frac{\nk n_j}{n^2}\sum_{i=1}^{K}a_{i}\right)\Big|\model_2\right]\tau^2.\\
 \end{align*}
 
 Now consider the true variance, which we are trying to estimate.
 
 \begin{align*}
 \text{var}(\taufs|\model_2)&= \text{var}\left(\sum_{k=1}^K\frac{\nk}{n}\taukfs\Big|\model_2\right)\\
 &=\EE\left[\left(\sum_{k=1}^K\frac{\nk}{n}\taukfs - \tau\right)^2\Big|\model_2\right]\\
 &=\EE\left[\left(\sum_{k=1}^K\frac{\nk}{n}\taukfs\right)^2\Big|\model_2\right] - \tau^2\\
  &=\EE\left[\sum_{k=1}^K\frac{\nk^2}{n^2}\taukfs^2\Big|\model_2\right]+\EE\left[\sum_{k=1}^K\sum_{j\neq k}\frac{\nk n_j}{n^2}\taukfs\tau_{j,\mathcal{S}}\Big|\model_2\right] - \tau^2\\
    &=\EE\left[\sum_{k=1}^K\frac{\nk^2}{n^2}\taukfs^2\Big|\model_2\right]-\EE\left[\sum_{k=1}^K\frac{\nk^2}{n^2}\Big|\model_2\right]\tau^2
 \end{align*}
 
We have the last equality because
\begin{align*}
\EE\left[\sum_{k=1}^K\sum_{j\neq k}\frac{\nk n_j}{n^2}\taukfs\tau_{j,\mathcal{S}}\Big|\model_2\right] &= \EE\left[\sum_{k=1}^K\sum_{j\neq k}\frac{\nk n_j}{n}\Big|\model_2\right]\tau^2\\
&=\EE\left[\sum_{k=1}^K\frac{\nk}{n}(1-\frac{\nk}{n})\Big|\model_2\right]\tau^2 \\
&= \EE\left[1-\sum_{k=1}^K\frac{\nk^2}{n^2}\Big|\model_2\right]\tau^2.
\end{align*}

Matching this up with the expectation of our estimator, we want
\begin{align*}
\frac{\nk^2}{n^2}=a_{k}-2a_k\frac{\nk}{n}+\frac{\nk^2}{n^2}\sum_{i=1}^{K}a_{i}
\end{align*}

 and
 \begin{align*}
 \sum_{k=1}^K\frac{\nk^2}{n^2} = \sum_{k=1}^K\sum_{j\neq k}\left(2a_k\frac{n_j}{n}-\frac{\nk n_j}{n^2}\sum_{i=1}^{K}a_{i}\right).
 \end{align*}
 
 The first equation we solved for before in Section~\ref{append:var_est_small_p_derv}.
So we will get
 \[a_k = \frac{\nk^2}{(n-2\nk)(n+\sum_{j=1}^K\frac{n_j^2}{n-2n_j})}.\]

Let's see if this weight works for the second term.

\begin{align*}
&\sum_{k=1}^K\sum_{j\neq k}\left(2a_k\frac{n_j}{n}-\frac{\nk n_j}{n^2}\sum_{i=1}^{K}a_{i}\right)\\
&=\sum_{k=1}^K2a_k(1-\frac{\nk}{n})-(1-\sum_{k=1}^K\frac{\nk^2}{n^2})\sum_{i=1}^{K}a_{i}\\
&=\sum_{k=1}^K \frac{2\nk^2(1-\frac{\nk}{n})}{(n-2\nk)(n+\sum_{j=1}^K\frac{n_j^2}{n-2n_j})}-(1-\sum_{k=1}^K\frac{\nk^2}{n^2})\sum_{i=1}^{K} \frac{n_i^2}{(n-2n_i)(n+\sum_{j=1}^K\frac{n_j^2}{n-2n_j})}\\
&=\sum_{k=1}^K \frac{\nk^2(1-2\frac{\nk}{n})}{(n-2\nk)(n+\sum_{j=1}^K\frac{n_j^2}{n-2n_j})}+\sum_{k=1}^K\frac{\nk^2}{n^2}\sum_{i=1}^{K} \frac{n_i^2}{(n-2n_i)(n+\sum_{j=1}^K\frac{n_j^2}{n-2n_j})}\\
&=\sum_{k=1}^K \frac{\nk^2(n-2\nk)}{n(n-2\nk)(n+\sum_{j=1}^K\frac{n_j^2}{n-2n_j})}+\sum_{k=1}^K\frac{\nk^2}{n^2}\sum_{i=1}^{K} \frac{n_i^2}{(n-2n_i)(n+\sum_{j=1}^K\frac{n_j^2}{n-2n_j})}\\
&=\sum_{k=1}^K \frac{\nk^2}{n(n+\sum_{j=1}^K\frac{n_j^2}{n-2n_j})}+\sum_{k=1}^K\frac{\nk^2}{n^2}\sum_{i=1}^{K} \frac{n_i^2}{(n-2n_i)(n+\sum_{j=1}^K\frac{n_j^2}{n-2n_j})}\\
&=\sum_{k=1}^K\frac{\nk^2(n+\sum_{i=1}^K \frac{n_i^2}{(n-2n_i)})}{n^2(n+\sum_{j=1}^K\frac{n_j^2}{n-2n_j})}\\
&=\sum_{k=1}^K\frac{\nk^2}{n^2}
\end{align*}

Which is exactly what we wanted.
So this weight works.
And because it is the same as the weight for the finite sample (where we wanted to get the first term in the variance decomposition correct), this also takes care of the first term in the variance decomposition.

The proof of this result for an infinite number of infinite size strata is direct by replacing the conditioning on $\mathcal{S}$ by conditioning on $\bm{B}$ and using results from the stratified sampling framework.
 \end{proof}

 \section{Creation of $\widehat{\sigma}^2_{SRS}$, Equation~(\ref{eq:var_est_srs_bk})}\label{append_var_srs}

In this section we derive a variance estimator for the case of simple random sampling and flexible blocks.
This framework includes the case of units coming from a simple random sample, with the blocks made after the fact.

We begin with the basic variance decomposition to examine what we are trying to estimate.
Let $\sigmatc$ be the population variance of treatment effects.

\begin{align}
 \text{var}(\tauestbk|SRS) &=\EE\left[\text{var}(\tauestbk|\mathcal{S})|SRS\right]+\text{var}\left(\EE[\tauestbk|\mathcal{S}]|SRS\right)\nonumber\\
 &=\EE\left[\sum_{k=1}^K\frac{\nk^2}{n^2}\left(\frac{\Sck}{\nck}+\frac{\Stk}{\ntk}-\frac{\Stck}{\nk}\right)\Big|SRS\right]+\text{var}\left(\taufs|SRS\right)\nonumber\\
  &=\EE\left[\sum_{k=1}^K\frac{\nk^2}{n^2}\left(\frac{\Sck}{\nck}+\frac{\Stk}{\ntk}-\frac{\Stck}{\nk}\right)\Big|SRS\right]+\frac{\sigmatc}{n}\nonumber\\
  &=\EE\left[\sum_{k=1}^K\frac{\nk^2}{n^2}\left(\frac{\Sck}{\nck}+\frac{\Stk}{\ntk}-\frac{\Stck}{\nk}\right)\Big|SRS\right]+\EE\left[\frac{\Stc}{n}\Big|SRS\right] \nonumber\\
    &=\underbrace{\EE\left[\sum_{k=1}^K\frac{\nk^2}{n^2}\left(\frac{\Sck}{\nck}+\frac{\Stk}{\ntk}\right)\Big|SRS\right]}_{\textbf{A}}+\underbrace{\EE\left[\frac{\Stc}{n}-\sum_{k=1}^K\frac{\nk}{n}\frac{\Stck}{n}\Big|SRS\right]}_{\textbf{B}}\label{eq_srs_breakdown}
\end{align}

Let's examine $\Stc$ so we can simplify term \textbf{B}.
\begin{align*}
\Stc &= \sum_{k=1}^K\frac{\nk-1}{n-1}\Stck + \sum_{k=1}^K\frac{\nk}{n-1}\left(\taukfs-\taufs\right)^2
\end{align*}

So term \textbf{B} can simplify as follows:
\begin{align*}
\Stc-\sum_{k=1}^K\frac{\nk}{n}\Stck &=\sum_{k=1}^K\frac{\nk-1}{n-1}\Stck + \sum_{k=1}^K\frac{\nk}{n-1}\left(\taukfs-\taufs\right)^2 -\sum_{k=1}^K\frac{\nk}{n}\Stck\\
&=\sum_{k=1}^K\frac{\nk}{n-1}\left(\taukfs-\taufs\right)^2 -\sum_{k=1}^K\frac{n-\nk}{n(n-1)}\Stck.
\end{align*}

Now recall from Supplementary Materials~\ref{append:var_est_small_p_derv} that
  \begin{align*}
\EE\left[\sum_{k=1}^K a_k \left(\taukest - \tauestbk\right)^2 |\mathcal{S}\right] 
 &=  \sum_{k=1}^{K}\left(a_{k} - 2a_{k}\frac{\nk}{n} + \frac{n_{k}^{2}}{n^{2}}\sum_{j=1}^{K}a_{j}\right)\text{var}\left(\taukest|\mathcal{S}\right) + \sum_{k=1}^{K}a_{k}\left(\taukfs - \taufs\right)^{2}.
 \end{align*}
 
 Letting $a_k = \nk$, we have
   \begin{align*}
\EE&\left[\sum_{k=1}^K \nk \left(\taukest - \tauestbk\right)^2 \Big|\mathcal{S}\right]\\
 &=  \sum_{k=1}^{K}\left(\nk - \frac{\nk^2}{n}\right)\text{var}\left(\taukest|\mathcal{S}\right) + \sum_{k=1}^{K}\nk\left(\taukfs - \taufs\right)^{2}\\
  &=  \sum_{k=1}^{K}\frac{\nk(n-\nk)}{n}\text{var}\left(\taukest|\mathcal{S}\right) + \sum_{k=1}^{K}\nk\left(\taukfs - \taufs\right)^{2}\\
    &=  \sum_{k=1}^{K}\frac{\nk(n-\nk)}{n}\left(\frac{\Sck}{\nck}+\frac{\Stk}{\ntk}-\frac{\Stck}{\nk}\right) + \sum_{k=1}^{K}\nk\left(\taukfs - \taufs\right)^{2}\\
  &=  \sum_{k=1}^{K}\frac{\nk(n-\nk)}{n}\left(\frac{\Sck}{\nck}+\frac{\Stk}{\ntk}\right) + \sum_{k=1}^{K}\nk\left(\taukfs - \taufs\right)^{2} -\sum_{k=1}^{K}\frac{(n-\nk)}{n}\Stck\\
    &=  \sum_{k=1}^{K}\frac{\nk(n-\nk)}{n}\left(\frac{\Sck}{\nck}+\frac{\Stk}{\ntk}\right) + (n-1)\left(\Stc-\sum_{k=1}^K\frac{\nk}{n}\Stck\right).
 \end{align*}
 
 This means that
    \begin{align*}
\EE&\left[\sum_{k=1}^K \frac{\nk}{n(n-1)} \left(\taukest - \tauestbk\right)^2 \Big|\mathcal{S}\right]
 =\sum_{k=1}^{K}\frac{\nk(n-\nk)}{n^2(n-1)}\left(\frac{\Sck}{\nck}+\frac{\Stk}{\ntk}\right) + \frac{\Stc}{n}-\sum_{k=1}^K\frac{\nk}{n}\frac{\Stck}{n}.
 \end{align*}
 
 So we have a way to estimate term \textbf{B} of Equation~\ref{eq_srs_breakdown}, which means we just need to add in a correction to get term \textbf{A}.
 \begin{align*}
& \sum_{k=1}^{K}\frac{\nk^2}{n^2}\left(\frac{\Sck}{\nck}+\frac{\Stk}{\ntk}\right)-\sum_{k=1}^{K}\frac{\nk(n-\nk)}{n^2(n-1)}\left(\frac{\Sck}{\nck}+\frac{\Stk}{\ntk}\right)\\
 &= \sum_{k=1}^{K}\frac{\nk(\nk-1)}{n(n-1)}\left(\frac{\Sck}{\nck}+\frac{\Stk}{\ntk}\right)
 \end{align*}
 
 Putting it all together, we have
 \begin{align*}
\EE\left[ \sum_{k=1}^{K}\frac{\nk(\nk-1)}{n(n-1)}\left(\frac{\Sck}{\nck}+\frac{\Stk}{\ntk}\right) + \sum_{k=1}^K \frac{\nk}{n(n-1)} \left(\taukest - \tauestbk\right)^2\Big|SRS\right]= \text{var}(\tauestbk|SRS).
 \end{align*}
 
 So 
 \[\widehat{\sigma}^2_{SRS} = \sum_{k=1}^{K}\frac{\nk(\nk-1)}{n(n-1)}\left(\frac{\sckest}{\nck}+\frac{\stkest}{\ntk}\right) + \sum_{k=1}^K \frac{\nk}{n(n-1)} \left(\taukest - \tauestbk\right)^2\]
 is an unbiased variance estimator.




\end{document}